\newcommand{\mynote}[2]{{\textcolor{#1}{ #2}}}
\definecolor{gray}{gray}{0.4}
\newcommand{\gray}[1]{\mynote{gray}{{\footnotesize #1}}}
\newtheorem{theorem}{Theorem}[section]
\newtheorem{lemma}[theorem]{Lemma}
\newtheorem{claim}[theorem]{Claim}
\newtheorem{question}[theorem]{Question}
\newtheorem{corollary}[theorem]{Corollary}
\newtheorem{definition}[theorem]{Definition}
\newtheorem{remark}[theorem]{Remark}
\newcommand{\Lap}{\operatorname{\rm Lap}}
\renewcommand{\chi}{\mathcal{X}}
\newcommand{\MMM}{\mathcal{M}}
\newcommand{\R}{\mathbb{R}}
\newcommand{\E}{\mathbb{E}}
\newcommand{\XXX}{\mathcal{X}}
\newcommand{\DDD}{\mathcal{D}}
\newcommand{\AAA}{\mathcal{A}}
\newcommand{\eps}{\varepsilon}
\newcommand{\Supp}{{\rm Supp}}
\title{Tight Bounds for Answering Adaptively Chosen Concentrated Queries}
\author{
Emma Rapoport\thanks{Tel Aviv University, Israel. Partially supported by the Israel Science Foundation (grant 1419/24), the Blavatnik Family Foundation and the Deutsch Foundation. Email: \texttt{emmarapoport@gmail.com}.
}
\and
Edith Cohen\thanks{Google Research and Tel Aviv University, Mountain View, CA, USA. Partially supported by the Israel Science Foundation, (grant 1156/23). Email: \texttt{edith@cohenwang.com}.
}
\and
Uri Stemmer\thanks{Tel Aviv University and Google Research, Israel. Partially supported by the Israel Science Foundation (grant 1419/24) and the Blavatnik Family Foundation. Email: \texttt{u@uri.co.il}.
}
}
\date{November 12, 2025}
\begin{document}

\maketitle

\begin{abstract}
Most work on adaptive data analysis assumes that samples in the dataset are independent. When correlations are allowed, even the non-adaptive setting can become intractable, unless some structural constraints are imposed. To address this, \cite{BF16} introduced the elegant framework of {\em concentrated queries}, which requires the analyst to restrict itself to queries that are concentrated around their expected value. While this assumption makes the problem trivial in the non-adaptive setting, in the adaptive setting it remains quite challenging. In fact, all known algorithms in this framework support significantly fewer queries than in the independent case: At most $O(n)$ queries for a sample of size $n$, compared to $O(n^2)$ in the independent setting.

In this work, we prove that this utility gap is inherent under the current formulation of the concentrated queries framework, assuming some natural conditions on the algorithm. Additionally, we present a simplified version of the best-known algorithms that match our impossibility result.
\end{abstract}

\section{Introduction}

Adaptive interaction with data is a central feature of modern analysis pipelines, from scientific exploration to model selection and parameter tuning. However, adaptivity introduces fundamental statistical difficulties, as it creates dependencies between the data and the analysis procedures applied to it, which could quickly lead to overfitting and false discoveries. Motivated by this, following the seminal work of \cite{DworkFHPRR15}, a substantial body of work has established rigorous frameworks for addressing this problem. 
These works demonstrated that various notions of algorithmic stability,
and in particular differential privacy (DP) \citep{DworkMNS06}, allow for methods which maintain statistical
validity under the adaptive setting. 
Most of the current work, however, focuses on the case where the underlying data distribution is a {\em product distribution}, i.e., the samples in the dataset are independent of each other.
Much less is understood about the feasibility of accurate adaptive analysis when the data exhibits correlations. In this work, we examine the extent to which accurate adaptive analysis remains possible under minimal structural conditions on the data distribution.

Before presenting our new results, we describe our setting more precisely. Let $\XXX$ be a data domain. We consider the following game between a data analyst $\AAA$ and a mechanism $\MMM$.

\begin{enumerate}[itemsep=1px, topsep=0px,leftmargin=20px]
    \item\label{step:introDist} The analyst $\AAA$ chooses a distribution $\DDD$ over tuples in $\XXX^*$ (under some restrictions).
    \item The mechanism $\MMM$ obtains a sample $S\leftarrow\DDD$. \quad\gray{\% We denote $|S|=n$.}
    \item For $k$ rounds $j=1,2,\dots,k$:
    \begin{enumerate}
        \item\label{step:introQuery} The analyst $\AAA$ chooses a query $q_j:\XXX\rightarrow[0,1]$,  possibly as a function of all previous answers given by $\MMM$ (under some restrictions).
        \item The mechanism $\MMM$ obtains $q_j$ and responds with an answer $a_j\in\R$, which is given to $\AAA$.
    \end{enumerate}
\end{enumerate}

Note that the analyst $\AAA$ is {\em adaptive} in the sense that it chooses the queries $q_j$ based on previous outputs of $\MMM$, which in turn depend on the sample $S$. So the queries $q_j$ themselves depend on $S$. If instead the analyst $\AAA$ were to fix all $k$ queries before the game begins, then these queries would be independent of the dataset $S$. We refer to this variant of the game (where all queries are fixed ahead of time) as the {\em non-adaptive setting}.

The goal of $\MMM$ in this game is to produce accurate answers w.r.t.\ the expectation of the queries over the underlying distribution $\DDD$. Formally, we say that $\MMM$ is {\em $(\alpha,\beta)$-statistically accurate} if for every analyst $\AAA$, with probability at least $1-\beta$, for every $j\in[k]$ it holds that $\left|a_j-q_j(\DDD)\right|\leq\alpha$, where $q_j(\DDD):=\E_{T\leftarrow\DDD}\left[\frac{1}{|T|}\sum_{x\in T}q_j(x)\right]$. 
As a way of dealing with worst-case analysts, the analyst $\AAA$ is assumed to be adversarial in that it tries to cause the mechanism to fail. We therefore sometimes think of $\AAA$ as an {\em attacker}.

The main question here is: 
\begin{question}
What is the maximal number of queries one can accurately answer, $k$, as a function of the sample size $n$, the desired utility parameters $\alpha,\beta$, and the type of restrictions we place on the choice of $\DDD$ and the queries $q_j$ (in Steps~\ref{step:introDist} and~\ref{step:introQuery} above)?
\end{question}
The vast majority of the work on adaptive data analysis (ADA) focuses on the case where $\DDD$ is restricted to be a {\em product distribution} over $n$ elements (without restricting the choice of the queries $q_j$). 
After a decade of research, this setting is relatively well-understood: For constant $\alpha,\beta$, there exist computationally efficient mechanisms that can answer $\Theta(n^2)$ adaptive queries, and no efficient mechanism can answer more than that.\footnote{See, e.g., \cite{hardt2014preventing,DworkFHPRR15,DworkFHPRR15b,steinke2015interactive,BassilyNSSSU16,cummings2016adaptive,rogers2016max,feldman2017generalization,NSSSU18,feldman2018calibrating,shenfeld2019necessary,steinke2020reasoning,JungLN0SS20,shenfeld2023generalization,blanc2023subsampling,nissim2023adaptive}} 

The situation is far less
well-understood when correlations in the data are possible. Let us consider the following toy example as a warmup. Suppose that the attacker randomly picks one of the following two distributions: 
\begin{itemize}[topsep=0px,leftmargin=16px]
    \item ${\boldsymbol{\DDD_0}=}$ The distribution that with probability $1/2$ returns the sample $(\frac{1}{2},\frac{1}{2},\dots,\frac{1}{2})$ and w.p.\ $1/2$ returns the sample $(0,0,\dots,0)$.
    \item ${\boldsymbol{\DDD_1}=}$ The distribution that with probability $1/2$ returns the sample $(\frac{1}{2},\frac{1}{2},\dots,\frac{1}{2})$ and w.p.\ $1/2$ returns the sample $(1,1,\dots,1)$.
\end{itemize}
Note that in this scenario, a mechanism holding the sample $(\frac{1}{2},\frac{1}{2},\dots,\frac{1}{2})$ cannot accurately answer the query $q(x)=x$, as the true answer could be either $1/4$ or $3/4$. The takeaway from this toy example is that when correlations in the data are possible, then we must impose additional restrictions on our setting in order to make it feasible. There are two main approaches for this in the literature: 
\begin{enumerate}[topsep=0px,leftmargin=20px]
    \item[{\bf 1.}] {\bf Explicitly limit dependencies within the sample \citep{KSS22}\bf:} Intuitively, if we restrict the attacker $\AAA$ to choose only distributions $\DDD$ that adhere to certain ``limited dependencies'' assumptions, then the problem becomes feasible. A downside of this approach is that it is typically tied to a specific measure for limiting dependencies, and it is not clear why one should prefer one measure over another.

    \item[{\bf 2.}] {\bf Limit the attacker to {\em concentrated queries} \citep{BF16}:} Notice that the toy example above cannot be solved even in the non-adaptive setting, because the description of the hard query $q(x)=x$ does not depend on the sample $S$. So in a sense, it is ``unfair'' to attempt solving it in the adaptive setting. In other words, {\em if something cannot be solved in the non-adaptive setting, how can we hope to solve it in the adaptive setting?} Motivated by this, \cite{BF16} restricted the attacker $\AAA$ to queries that in the non-adaptive setting are sharply concentrated around their true mean. Specifically, the attacker is restricted to choose queries $q_j$ such that if we were to sample a fresh dataset $T$ from the underlying distribution $\DDD$ (where $T$ is independent of the description of $q_j$), then with high probability it holds that the empirical average of $q_j$ on $T$ is close to the true mean of $q_j$ over $\DDD$. Notice that under this restriction, the problem becomes trivial in the non-adaptive setting, as we could simply answer each query using its exact empirical average. In the adaptive setting, however, the problem is quite challenging.
\end{enumerate}

In this work we continue the study of this question for concentrated queries. We aim to characterize the largest number of adaptively-chosen concentrated-queries one can accurately answer (without assuming independence in the data).  Formally,

\begin{definition}[Concentrated queries]
\label{def:gamma-concentrated}
Let $\XXX$ be a domain, let $\DDD$ be a distribution over tuples in $\XXX^*$, and let $\eps,\gamma\in[0,1]$ be parameters. A query \(q : \chi \to [0,1]\) is \emph{$(\eps,\gamma)$-concentrated} w.r.t.\ $\DDD$ if
\[
  \Pr_{S \sim \mathcal{D}}\left[
    \left| q(S) - q(\DDD) \right| \ge \varepsilon
  \right] \le \gamma,
\]
where $q(S)=\frac{1}{|S|}\sum_{x\in S}q(x)$ is the empirical average of $q$ on $S$ and $q(\DDD)=\E_{T\leftarrow\DDD}\left[q(T)\right]$ is the expected value of $q$ over sampling a fresh dataset from $\DDD$.
\end{definition}

For example, if $\DDD$ is a {\em product distribution} over datasets of size $n$, then, by the Hoeffding bound, every query $q:\XXX\rightarrow[0,1]$ is $(\eps,\gamma)$-concentrated for every $\eps\geq\sqrt{\frac{\ln 2}{2n}}$ with $\gamma=2 e^{-2n\eps^2}$. This example motivates the following question:

\begin{question}\label{q:motivation}
How many adaptive queries could we efficiently answer when correlations in the data are allowed, but the analyst is restricted to $(\eps,\gamma)$-concentrated queries for $\eps,\gamma$ that are comparable to what is guaranteed without correlations, say $\eps=\frac{1}{\sqrt{n}}$ and $\gamma=n^{-10}$? 
\end{question}

\cite{BF16} introduced this question and presented {\em noise addition mechanisms} that can efficiently answer $O(n)$ adaptive queries under the conditions of Question~\ref{q:motivation}. By {\em noise addition mechanism} we mean a mechanism that given a sample $S$ answers every query $q$ with $q(S)+\eta$, where \(\eta\) is drawn independently from a fixed noise distribution. Note the stark contrast from the i.i.d.\ case, where it is known that $O(n^2)$ queries can be supported rather than only $O(n)$. To achieve their results, \cite{BF16} introduced a stability notion called {\em typical stability} and showed that (1) noise addition mechanisms with appropriate noise are typically stable; and (2) typical stability guarantees statistical validity in the adaptive setting, even in the face of correlations in the data. More generally, the algorithm of \cite{BF16} can support roughly $\tilde{O}\left(\frac{1}{\eps^2}\right)$ queries provided that $\gamma$ is mildly small (polynomially small in the number of queries $k$).

Following that, \cite{KSS22} showed that a qualitatively similar result could be obtained via compression arguments (instead of typical stability). However, their (computationally efficient) algorithms require $\gamma$ to be exponentially small in $k$ and thus do not apply to the parameters $\eps,\gamma$ stated in Question~\ref{q:motivation}. They do support other ranges for $(\eps,\gamma)$, but at any case their efficient algorithms cannot answer more than $O(n)$ queries when the parameters $(\eps,\gamma)$
adhere to the behavior of Hoeffding’s inequality for i.i.d.\ samples. For example, for $\eps=O(1)$ and $\gamma=2^{-\Omega(n)}$ their algorithm supports $O(n)$ adaptive queries to within constant accuracy (even when there are correlations in the data).

To summarize, currently there are two existing techniques for answering adaptively chosen concentrated queries: Either via typical stability in the {\em small $\eps$ regime} or via compression arguments in the {\em tiny $\gamma$ regime}. At any case, all known results do not break the $O(n)$ queries barrier, even when the concentration parameters reflect the behavior guaranteed in the i.i.d.\ setting. In contrast, without correlations in the data, answering $O(n^2)$ queries is possible.

\subsection{Our results}

\subsubsection{An impossibility result for answering concentrated queries}

We establish a new negative result providing strong evidence that the {\em linear barrier} discussed above is inherent. Our result applies to mechanisms that perturb the empirical mean of a query either by adding independent noise or by evaluating it on a randomly selected subsample of the dataset, which together constitute all known efficient, polynomial-time techniques for answering a super-linear number of queries in the i.i.d.\ setting. For brevity, we refer to these as \textbf{Noise and Subsampling (NS)} mechanisms. Specifically, we show that NS mechanisms cannot answer more than $O(n)$ adaptively chosen concentrated queries, even if the query concentration matches the behavior of Hoeffding’s inequality for i.i.d.\ samples. This constitutes the first negative result for answering adaptively chosen concentrated queries, and stands in sharp contrast to the \(O(n^2)\) achievable in the i.i.d.\ setting. Specifically,

\begin{theorem}[informal]
\label{thm:lower-bound-gamma}
Let \(\varepsilon > 0\) and \(\gamma \in (0,1]\). Then there exists a domain \(\chi\) and a distribution \(\mathcal{D}\) over \(\chi^n\) such that the following holds. For any NS mechanism $\MMM$ there exists an adaptive analyst issuing $(\eps,\gamma)$-concentrated queries \(q_1, \dots, q_k\), where $k = \Omega\left(\min\left\{
    \frac{1}{\gamma},\;
      \frac{1}{\varepsilon^2}
\ln\left(\frac{1}{\varepsilon\cdot \gamma}\right)
  \right\}\right)$,
such that with probability at least $0.9$ there is a query $q_i$ for which the answer provided by $\MMM$ deviates from its true mean $q_i(\DDD)$ by at least 0.9.
\end{theorem}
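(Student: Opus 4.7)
The plan is to attack any candidate noise addition mechanism via two separate constructions, each achieving one branch of the $\min$ in the stated bound; the attacker uses whichever gives the stronger guarantee for the target $(\eps,\gamma)$. In both constructions the distribution $\DDD$ is the uniform distribution over $N$ ``identities'' $e_1,\dots,e_N\in\chi$ (each $e_i$ being the constant tuple $(i,i,\dots,i)$), so that $S\sim\DDD$ is equivalent to a uniformly random index $i^{\ast}\in[N]$. A useful preliminary reduction is that constant queries $q\equiv c$ are trivially $(\eps,\gamma)$-concentrated, so by issuing a handful of constant queries the attacker rules out the case where the noise distribution places non-trivial mass on $|\eta|\ge 0.9$; otherwise those constant queries already produce a $0.9$-deviation with probability $\ge 0.9$. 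Under the remaining case the noise is effectively bounded, so the attacker only needs to manufacture a single additional concentrated query whose empirical value on $S$ is bounded away from its true mean.

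For the $\Omega(1/\gamma)$ branch I use a non-adaptive counting attack. Set $N=\lceil 1/\gamma\rceil$ and, for each $j=1,\dots,\lfloor k/8\rfloor$, issue $O(1)$ pairs of indicator and anti-indicator queries $q_j^{+}(x)=\mathbf{1}[x=e_j]$ and $q_j^{-}(x)=1-\mathbf{1}[x=e_j]$. Each is $(\eps,1/N)$-concentrated -- hence $(\eps,\gamma)$-concentrated -- for every $\eps<1-1/N$. With $k=\Omega(1/\gamma)$, with probability $\ge 0.95$ over $S$ the true index $i^{\ast}$ lies in $\{1,\dots,\lfloor k/8\rfloor\}$; on that event both $q_{i^{\ast}}^{+}$ and $q_{i^{\ast}}^{-}$ have $|q(S)-q(\DDD)|=1-1/N$ but with opposite signs, and the two ``safe zones'' for $\eta$ (the ranges keeping $|a-q(\DDD)|<0.9$) are disjoint intervals of $\R$, so the probability that a given independent pair $(q^+,q^-)$ both land within $0.9$ of their true means is at most $1/4$; using $O(1)$ independent pairs amplifies the failure probability past $0.9$.

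For the $\Omega(\log(1/(\eps\gamma))/\eps^2)$ branch I use an adaptive fingerprinting attack. Set $N=\lceil 1/(\eps\gamma)\rceil$ and let the rows of a matrix $V\in\{-1,+1\}^{(k-O(1))\times N}$ be independent uniformly random \emph{balanced} $\pm 1$ vectors. The bulk of the queries are the non-adaptive signal queries $q_\ell(e_i)=\tfrac{1}{2}+\tfrac{\eps}{2}V_{\ell,i}$; each has $q_\ell(\DDD)=1/2$ exactly and $|q_\ell(e_i)-1/2|=\eps/2<\eps$, so it is $(\eps,0)$-concentrated. The mechanism returns $a_\ell=\tfrac{1}{2}+\tfrac{\eps}{2}V_{\ell,i^{\ast}}+\eta_\ell$, and the attacker forms the maximum-likelihood estimate $\hat i=\arg\max_{i\in[N]}\sum_{\ell}V_{\ell,i}(a_\ell-\tfrac{1}{2})$. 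The score at the true identity exceeds the score at any alternative by $\Theta(k\eps)$ in expectation, while the noise contribution $\sum_\ell V_{\ell,i}\eta_\ell$ has (post-truncation) concentration scale $O(\sqrt{k\log N})$ by Hoeffding; a union bound over the $N$ candidates then yields $\hat i=i^{\ast}$ with probability $\ge 0.99$ as soon as the number of signal queries is $\Omega(\log(N)/\eps^2)=\Omega(\log(1/(\eps\gamma))/\eps^2)$. The last $O(1)$ queries apply the killer pairs from the first branch to $e_{\hat i}$, producing the claimed large deviation.

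The main obstacle is making the fingerprinting step robust to an arbitrary (possibly heavy-tailed) noise distribution, for which a direct sub-Gaussian MLE analysis is unavailable. I would resolve this by invoking the preliminary reduction to replace each $\eta_\ell$ by the truncation $\widetilde\eta_\ell := \eta_\ell\cdot\mathbf{1}[|\eta_\ell|\le C]$ for a suitable absolute constant $C$ -- simultaneous equality $\widetilde\eta_\ell = \eta_\ell$ across all signal queries holds with probability $\ge 0.99$, since otherwise the constant-query fallback has already produced a failure -- and then apply Hoeffding to the bounded $\widetilde\eta_\ell$'s. A secondary subtlety is the balance $N=\Theta(1/(\eps\gamma))$: enlarging $N$ strengthens the $\log(N)/\eps^2$ lower bound from decoding, but the killer queries' $(\eps,\gamma)$-concentration requires $1/N\le\gamma$, and this joint optimization is exactly what produces $\log(1/(\eps\gamma))$ rather than just $\log(1/\gamma)$ in the final bound.
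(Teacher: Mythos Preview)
Your two-branch strategy matches the paper's: a simple $O(1/\gamma)$ indicator-query attack (the paper's Claim~A.1) combined with a fingerprinting-style decoding attack for the $\Omega\bigl(\tfrac{1}{\eps^2}\ln\tfrac{1}{\eps\gamma}\bigr)$ branch (the paper's Algorithm~1 and Theorem~3.5). The constructions differ in the details. For the fingerprinting branch, the paper partitions the domain into $r=1/\eps$ blocks and has the sample contain one representative from each block repeated $\eps n$ times; the signal queries are $\{0,1\}$-valued Bernoulli$(p_t)$ queries supported on a single block (so the empirical average lies in $\{0,\eps\}$), and the per-round score is the correlation $(a_t-p_t/r)(q_t(x_1^j)-p_t)$, analyzed via Hoeffding on the score differences. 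Your constant-tuple distribution with balanced $\pm\eps/2$ signal queries and the linear score $\sum_\ell V_{\ell,i}(a_\ell-\tfrac{1}{2})$ is arguably simpler and delivers the same bound. Your indicator/anti-indicator ``disjoint safe zones'' trick in the first branch is also a nice self-contained way to force a deviation against noise-addition mechanisms specifically; the paper's Claim~A.1 is stated for arbitrary mechanisms and so sidesteps that issue.

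There is one genuine gap in your noise-handling reduction. Issuing a ``handful'' of constant queries only certifies that $\Pr[|\eta|\ge 0.9]$ is below some fixed constant $p_0$; to then conclude by union bound that $|\eta_\ell|\le C$ simultaneously for all $k$ signal rounds with probability $\ge 0.99$ you would need $p_0\le 0.01/k$, which forces $\Theta(k)$ constant queries rather than $O(1)$ (a constant-factor overhead, so not fatal, but not what you wrote). The cleaner repair is to drop the constant queries entirely and observe that the signal queries themselves serve as the fallback: if any round has $|\eta_\ell|>0.9+\eps/2$ then $|a_\ell-q_\ell(\DDD)|=\bigl|(\eps/2)V_{\ell,i^\ast}+\eta_\ell\bigr|\ge|\eta_\ell|-\eps/2>0.9$, so that round already witnesses the desired deviation; otherwise every $|\eta_\ell|\le 0.9+\eps/2$ and Hoeffding on bounded summands applies directly. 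The paper avoids this issue altogether by explicitly assuming clipped outputs and zero-mean noise (see the ``Finite-precision and bounded outputs'' paragraph in Section~2 and the proof of Lemma~3.4).
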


To interpret this result, note for example that when $\eps=O(1)$ and $\gamma=2^{-n}$ then our bound on $k$ gives $k=O(n)$. We show that the same is true for all values of $(\eps,\gamma)$ that match the behavior of the Hoeffding bound in the i.i.d.\ setting.

This negative result emphasizes a fundamental limitation. In order to break the linear barrier on the number of supported queries, future work must either impose additional structural assumptions on the problem or introduce new algorithmic techniques beyond noise addition and subsampling mechanisms.

\subsubsection{A simplified positive result}

As we mentioned, \cite{BF16} introduced the notion of {\em typical stability} and leveraged it to design algorithms supporting adaptive concentrated queries. However, their definitions and techniques are quite complex. In particular, bounding the number of supported queries $k$ as a function of the concentration parameters $\eps$ and $\gamma$ is not easily extractable from their theorems. 

We present a significantly simpler analysis for their algorithm that does not use typical stability at all. Instead, it relies on techniques from differential privacy \cite{DworkMNS06}. In addition to being simpler, our analysis allows us to save logarithmic factors in the resulting bounds on $k$. Formally, we show the following theorem.

\begin{theorem}[informal]
\label{thm:laplace-beta-delta-optimal-informal}
Fix parameters $\eps,\gamma$. There exists a noise addition mechanism $\MMM$ that guarantees $(\frac{1}{100},\frac{1}{100})$-statistical accuracy against any analyst $\AAA$ issuing at most $k$ queries which are $(\eps,\gamma)$-concentrated, provided that 
$
  k = O\left(
     \min\left\{
       \frac{1}{\gamma},\;
       \frac{1}{\varepsilon^{2}\,[\ln(1/\varepsilon)]^{2}}
     \right\}
  \right).
$
\end{theorem}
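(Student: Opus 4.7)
The plan is to analyze the Laplace noise mechanism $\MMM$ that answers each query $q_j$ by $a_j = q_j(S) + \eta_j$ with $\eta_j$ drawn independently from $\Lap(b)$ for a scale $b$ to be chosen. By the triangle inequality,
\[
|a_j - q_j(\DDD)| \;\leq\; |\eta_j| \;+\; |q_j(S) - q_j(\DDD)|.
\]
A union bound over $k$ independent Laplace draws gives $\max_j |\eta_j| \leq O(b\log k)$ with probability at least $0.995$, so taking $b = \Theta(1/\log k)$ controls the noise term. The real content is showing that the empirical error $|q_j(S) - q_j(\DDD)|$ remains small even though $q_j$ is chosen as a function of $a_1,\dots,a_{j-1}$ and hence depends on $S$.

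The key observation is that for any \emph{fixed} $(\eps,\gamma)$-concentrated query $q$, two independent samples $S, S' \sim \DDD$ satisfy $|q(S) - q(S')| \leq 2\eps$ with probability at least $1 - 2\gamma$. On this typical event, the Laplace mechanism has sensitivity $2\eps$ with respect to swapping the \emph{entire} dataset with a fresh one, so a single query enjoys a $(2\eps/b)$-differential-privacy-like guarantee against such full-dataset swaps, up to the $2\gamma$ probability of landing outside the typical event. Advanced composition across $k$ adaptive queries, with the per-query failure probabilities absorbed into an additive $\delta_{DP} = O(k\gamma)$ term via a union bound, yields a joint $(\eps_{DP},\delta_{DP})$-indistinguishability between the transcript generated from $S$ and the transcript generated from a freshly resampled $S' \sim \DDD$, with $\eps_{DP} = O(\sqrt{k \log(1/\delta_{DP})}\cdot \eps / b)$.

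Armed with this stability, the standard monitor/transfer argument from the DP literature applies. Let $q_{j^*}$ denote the query on which the adversary maximizes $|a_{j^*} - q_{j^*}(\DDD)|$, viewed as a (randomized) function of the transcript. Under the resampled dataset $S'$, the concentration definition directly gives $|q_{j^*}(S') - q_{j^*}(\DDD)| \leq \eps$ except with probability $\gamma$; the indistinguishability above then transfers this bound back to $|q_{j^*}(S) - q_{j^*}(\DDD)|$ up to an $O(\eps_{DP})$ slack and an $O(\delta_{DP})$ failure term. Setting $b = \Theta(1/\log k)$ and requiring $\eps_{DP} \leq 1/400$ imposes the constraint $k \eps^2 \log^2 k \lesssim 1$, whose solution (using $\log k \asymp \log(1/\eps)$) is $k = O\bigl(1/(\eps^2\log^2(1/\eps))\bigr)$. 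The complementary bound $k = O(1/\gamma)$ follows from requiring the $\delta_{DP}=O(k\gamma)$ budget to be subconstant, or equivalently from a naive union bound over the per-query concentration failure events; taking the minimum of the two bounds yields the stated theorem.

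The main obstacle I foresee is that the classical DP transfer theorem for adaptive data analysis swaps a \emph{single} sample at a time and crucially exploits that the underlying distribution is a product measure, which is not available here. The necessary fix --- swapping the whole dataset at once --- is precisely why concentration is the right primitive to plug in, but care must be taken when composing: the $\gamma$ failure probabilities must be carried as a $\delta_{DP}$ budget rather than folded into the $\eps_{DP}$ budget, and the accumulated $O(\eps_{DP})$ slack must remain strictly below $\alpha=1/100$. Executing this carefully is also where the logarithmic savings over the typical-stability analysis of \cite{BF16} originate, since tight advanced composition for pure DP is sharper than the guarantees that typical stability provides directly.
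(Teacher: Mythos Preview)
Your high-level strategy---use a DP-style indistinguishability argument plus advanced composition, then transfer concentration from a ``fresh'' copy back to the actual sample---is the right one and matches the paper. However, the sentence ``Advanced composition across $k$ adaptive queries, with the per-query failure probabilities absorbed into an additive $\delta_{DP} = O(k\gamma)$ term via a union bound'' hides exactly the circularity you flag at the end, and your proposed fix (carry $\gamma$ as a $\delta$ budget) does not resolve it.

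To run advanced composition between the transcript on $S$ and the transcript on a fresh $S'$, you need per-round closeness, which requires $|q_j(S)-q_j(S')|\le 2\eps$ for the query $q_j$ asked in round $j$. Concentration gives this with probability $1-2\gamma$ only for a \emph{fixed} query independent of $(S,S')$. But $q_j$ is determined by the transcript prefix, which depends on $S$; once you condition on a history, $S$ is no longer distributed as $\DDD$, and the concentration hypothesis no longer applies. Your union bound is therefore over events whose defining queries are already correlated with the sample, and there is no direct way to control them. Equally, your monitor step ``transfer $|q_{j^*}(S')-q_{j^*}(\DDD)|\le\eps$ back to $|q_{j^*}(S)-q_{j^*}(\DDD)|$'' presupposes the very indistinguishability you have not yet established.

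The paper breaks this loop with two devices you are missing. First, it compares the real mechanism not to a random fresh $S'$ but to an \emph{oracle} that answers every query with $q(\DDD)+\Lap(b)$; the oracle never touches $S$, so the adaptive queries it induces are genuinely independent of $S$, and concentration applies cleanly with a union-bound cost of $k\gamma$. Second, to relate the real mechanism to this oracle it introduces a \emph{hybrid} that returns $q_i(S)+\Lap(b)$ only while $\max_{j\le i}|q_j(S)-q_j(\DDD)|\le\eps$, and otherwise switches permanently to the oracle. By construction the hybrid's center is always within $\eps$ of $q(\DDD)$, so hybrid and oracle are \emph{pure} $(\eps/b,0)$-indistinguishable in every single round, for every $S$---there is no $\delta$ to manage inside the composition. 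Advanced composition now goes through verbatim; the ``all queries $\eps$-concentrated on $S$'' event is bounded under the oracle, transferred to the hybrid, and on that event hybrid coincides with the real mechanism. Your choice $b=\Theta(1/\log(1/\eps))$ and the final arithmetic are correct once this hybrid device is in place.
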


In retrospect, leveraging differential privacy (DP) to answer concentrated queries (as we do in this work) is a natural approach as it is simpler than prior work on this topic and aligns with other works on other variants of the ADA problem. In a sense, the reason for the additional complexity in the work of \cite{BF16} steams from their alternative stability notion (typical stability). To the best of our knowledge, our work is the first to derive meaningful positive results for answering adaptively chosen concentrated queries via differential privacy {\em when correlations are present in the data}.

\subsubsection{Technical overview of our negative result (informal)}

The key insight underlying our negative result is that query concentration alone does not prevent an attacker from extracting substantial information about correlated data. We consider a domain \(\chi\) partitioned into $\frac{1}{\eps}$ subsets, and define a distribution \(\mathcal{D}\) over \(\chi^n\) in which each sample consists of $\frac{1}{\eps}$ distinct elements, each drawn from a different subset and repeated \(\varepsilon n\) times. 

This structure simultaneously maximizes the information each query can reveal while ensuring that every query remains tightly concentrated. The attacker designs each query to assign nonzero values only within a single targeted subset, keeping the empirical mean within \([0,\varepsilon]\) and satisfying $(\eps,\gamma)$-concentration by construction. Yet, the responses still leak significant information about the data.

Building on the adaptive attack of \cite{NSSSU18} for the i.i.d.\ setting, our attacker progressively identifies the repeated elements: each query randomly assigns binary values within the targeted subset and updates an accumulated score to isolate the correct element. We present a simple analysis adapted to our setting, showing that the sample can be recovered with high probability. This breaks the accuracy guarantee of \emph{any} NS mechanism once the number of queries exceeds our derived lower bound.

Our construction highlights that when correlations are present, concentration alone cannot prevent information leakage. Thus, accurately answering more than a linear number of adaptive, concentrated queries requires stronger structural assumptions on the distribution.

\subsubsection{Technical overview of our positive result (informal)}

We prove Theorem~\ref{thm:laplace-beta-delta-optimal-informal} by showing that the mechanism that answers queries with their noisy empirical average guarantees statistical accuracy (for an appropriately calibrated noise distribution).
To show this, we introduce a thought experiment involving three mechanisms, all initialized with the same sample \(S \sim \mathcal{D}\), all interacting with the same analyst $\AAA$:

\begin{itemize}[leftmargin=16px, topsep=0px]
  \item \textbf{Real-world mechanism:} Answers each query using the {\em empirical mean} plus independent noise. This is the mechanism whose accuracy we want to analyze.
  
  \item \textbf{Oracle mechanism:} Answers each query using its {\em true mean} under the target distribution \(\mathcal{D}\), plus independent noise. Note that this mechanism ``knows'' the target distribution $\DDD$. This is not a real mechanism; it only exists as part of our proof. The noise magnitude will be small enough such that this mechanism remain accurate.
  
  \item \textbf{Hybrid mechanism:} Initially behaves like the real-world mechanism, but switches permanently to behave like the oracle mechanism if at some point the empirical mean on any query deviates significantly from its true mean. This is also not a real mechanism; it exists only as part of our proof.
\end{itemize}

Our analysis proceeds in two steps. First, we leverage techniques from differential privacy to demonstrate that the output distributions of the the oracle and hybrid mechanisms are close. This allows us to invoke advanced-composition-like theorems from differential privacy, ensuring that the outcome distributions of these two mechanisms remain close even after \(k\) adaptive queries.

Second, we identify a class of \emph{good interactions}. In these scenarios, the hybrid mechanism never switches to oracle responses, making its behavior {\em identical} to the real-world mechanism. We show that these good interactions occur with high probability under the oracle mechanism, and by extension, under the hybrid mechanism. We thus get that the real-world mechanism is also likely to produce these good interactions.

By combining these insights, we conclude that, under suitable concentration assumptions on the queries, the real-world mechanism's outputs closely track those of the oracle mechanism, which is statistically accurate by definition, thus ensuring statistical accuracy even in adaptive settings.

\section{Preliminaries}
\label{sec:preliminaries}

We now formally define the two classes of mechanisms for which our negative result holds, which we refer to collectively as \textbf{Noise and Subsampling (NS)} mechanisms.

\begin{definition}[Noise-Addition Mechanism]
\label{def:noise_addition}
A mechanism $\mathcal{M}$ is a {\em noise-addition mechanism} if, given a dataset $S$ and a statistical query $q$, it returns
\(a = q(S) + \eta\), where $\eta$ is a random variable drawn independently from a fixed, zero-mean noise distribution that does not depend on $S$ or $q$.
\end{definition}

\begin{definition}[Subsampling Mechanism]
\label{def:subsampling}
A mechanism $\mathcal{M}$ is a {\em subsampling mechanism} if, given a dataset $S$ of size $n$, it answers each query $q$ as follows.  
For each round independently, the mechanism samples a subsample $S'$ of size $m$ by drawing $m$ elements from $S$ independently and uniformly at random \emph{with replacement}, and returns
\(a = q(S') = \frac{1}{m}\sum_{x \in S'} q(x)\),
the empirical mean of $q$ on the subsample.  
The subsample $S'$ is freshly resampled for every query, independently of previous rounds.
\end{definition}

\paragraph{Differential privacy.} 
Consider an algorithm that operates on a dataset. Differential privacy is a stability notion requiring the (outcome distribution of the) algorithm to be insensitive to changing one example in the dataset. Formally, 

\begin{definition}[\cite{DworkMNS06}]\label{def:DP}
    Let $\MMM$ be a randomized algorithm whose input is a dataset. Algorithm $\MMM$ is {\em $(\eps,\delta)$-differentially private (DP)} if for any two datasets $S,S'$ that differ in one point (such datasets are called {\em neighboring}) and for any event $E$ it holds that
$
\Pr[\MMM(S)\in E]\leq e^{\eps}\cdot\Pr[\MMM(S')\in E]+\delta.
$
\end{definition}

The most basic constructions of differentially private algorithms are via the Laplace mechanism as follows.

\begin{definition}[The Laplace Distribution]
A random variable has probability distribution $\Lap(b)$ if its probability density function is $f(x)=\frac{1}{2b}\exp\left(-\frac{|x|}{b}\right)$, where $x\in\R$.
\end{definition}

\begin{theorem}[\cite{DworkMNS06}]
Let $f$ be a function that maps datasets to the reals with sensitivity $\ell$ (i.e., for any neighboring $S,S'$ we have $|f(S)-f(S')|\leq\ell$). The mechanism $\MMM$ that on input $S$ adds noise with distribution $\Lap(\frac{\ell}{\eps})$ to the output of $f(S)$ preserves $(\eps,0)$-differential privacy.
\end{theorem}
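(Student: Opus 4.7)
The plan is to reduce the event-level inequality in Definition~\ref{def:DP} to a pointwise comparison of the two output densities, and then invoke the triangle inequality together with the sensitivity bound. Since $\MMM(S) = f(S) + \eta$ with $\eta \sim \Lap(\ell/\eps)$ is a continuous random variable on $\R$, for every neighboring pair $S, S'$ and every measurable event $E \subseteq \R$ we have
\[
\Pr[\MMM(S) \in E] = \int_E p_S(y)\, dy, \qquad p_S(y) = \frac{\eps}{2\ell}\exp\!\left(-\frac{\eps\,|y - f(S)|}{\ell}\right),
\]
so it suffices to show $p_S(y) \le e^{\eps}\, p_{S'}(y)$ for every $y \in \R$; integrating this pointwise bound over $E$ immediately yields the definition with $\delta = 0$.

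Next I would compute the ratio of densities at an arbitrary point $y$. Cancelling the normalizing constants gives
\[
\frac{p_S(y)}{p_{S'}(y)} = \exp\!\left(\frac{\eps}{\ell}\bigl(|y - f(S')| - |y - f(S)|\bigr)\right).
\]
Here the main estimate is the reverse triangle inequality: $|y - f(S')| - |y - f(S)| \le |f(S) - f(S')|$. Combining this with the sensitivity assumption $|f(S) - f(S')| \le \ell$ bounds the exponent by $\eps$, so the ratio is at most $e^{\eps}$, as required.

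The argument has no real obstacle; the only thing worth being careful about is that the bound must hold pointwise for \emph{every} $y$ (not merely in expectation), which is exactly what the reverse triangle inequality delivers uniformly. The symmetry between $S$ and $S'$ gives the same bound with the roles swapped, so the definition holds in both directions, completing the proof.
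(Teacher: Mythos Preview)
Your proof is correct and is the standard argument for the Laplace mechanism. Note, however, that the paper does not supply its own proof of this theorem: it is stated in the preliminaries as a known result attributed to \cite{DworkMNS06}, so there is no in-paper proof to compare against. Your density-ratio computation via the reverse triangle inequality is exactly the classical derivation.
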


\paragraph{Finite precision and bounded outputs}
Real-world computing devices can only produce finitely many bits of precision. Accordingly, we assume outputs are rounded or truncated, ensuring a discrete output space. In our negative result, we additionally assume that outputs lie within a fixed bounded range. This holds automatically for subsampling mechanisms, since the empirical mean of values in $[0,1]$ always lies in $[0,1]$. For noise-addition mechanisms, we assume outputs lie in the interval $[-1,2]$. Since queries are bounded in $[0,1]$ and the accuracy parameter $\alpha$ is also in $[0,1]$, any response outside this range would already violate the accuracy guarantee, meaning the mechanism has failed and the attack has succeeded.

\section{An impossibility result for answering concentrated queries}
\label{sec:negative}

We begin by noting that the bound of \( 1/\gamma \) queries is unavoidable. 
To see this, consider a distribution $\mathcal{D}$ that is uniform over $1/\gamma$ {\em disjoint} samples $S_1,\dots,S_{1/\gamma}$ of size $n$ each. Now consider the analyst that queries (one by one) all $1/\gamma$ queries of the form $q_i(x)=1$ if $x\in S_i$ and $q_i(x)=0$ otherwise. The ``true mean'' of each of these queries over $\mathcal{D}$ is exactly $\gamma$, and each of them, the probability of deviating from this true mean by more than $\gamma$ (over sampling $S\sim \mathcal{D}$) is at most $\gamma$. So for $\gamma<\varepsilon$ and \(\alpha < 1-\gamma\) these queries are all concentrated, and one of them causes the mechanism to lose accuracy. See Appendix~\ref{app:gamma-counterexample} for the formal details.

The main result of this section gives a stronger impossibility bound. We construct a distribution and domain such that any NS mechanism can be forced to fail with probability \(1-\beta\) by an attacker issuing only \((\varepsilon,\gamma)\)-concentrated queries after
\(k = \Omega\left(
    \tfrac{1}{\varepsilon^2} \cdot
\ln\left(\tfrac{1}{\varepsilon\cdot\beta\cdot \gamma}\right)
  \right)
\) rounds.

We then consider the setting where \(\gamma\) is a function of \(n\) and \(\varepsilon\) that corresponds to the concentration of bounded queries in an i.i.d.\ regime. Specifically, if \(\gamma(n,\varepsilon) = 2 \exp(-2 \varepsilon^2 n)\), as given by the double-sided Hoeffding inequality, then the two combined bounds imply that no NS mechanism can answer more than \( O(n) \) such queries.

\paragraph{Domain and distribution.} We now formalize the domain and sample construction described above. Let \(\varepsilon \in (0,1]\) and define \(r = 1/\varepsilon\). Let \(\chi\) be a finite domain of size \(N = \max\left\{ \tfrac{1}{\varepsilon^2},\, \tfrac{1}{\varepsilon\,\gamma} \right\}\). Assume for simplicity that \(r\) is an integer and that it divides both \(N\) and \(n\). Partition \(\chi\) into \(r\) disjoint subsets \(\chi_1,\dots,\chi_r\), each of size \(\varepsilon N\). Label the elements in each subset arbitrarily: \(\chi_i = \{x_i^1,\, x_i^2,\, \dots,\, x_i^{\varepsilon N} \}\) for all \(i = 1,\dots,r\).

The distribution \(\mathcal{D}\) over \(\chi^n\) is defined as follows. First, sample an index \(j \sim \mathrm{Unif}\{1,2,\dots,\varepsilon N\}\). Then, output the sample \(S = \bigl( \underbrace{x_1^j,\dots,x_1^j}_{\varepsilon n},\, \underbrace{x_2^j,\dots,x_2^j}_{\varepsilon n},\, \dots,\, \underbrace{x_r^j,\dots,x_r^j}_{\varepsilon n} \bigr)\). That is, a single index \(j\) determines one point \(x_i^j\) from each subset \(\chi_i\), and the sample consists of each of these points repeated exactly \(n/r = \varepsilon n\) times. Although \(\mathcal{D}\) is defined on \(\chi^n\), its support contains only \(N/r = \varepsilon N\) distinct samples, and each is determined by the shared index \(j\).

\paragraph{Attack Overview:}
The attack procedure (Algorithm~\ref{alg:attack-procedure}) operates over \(k\) rounds of information gathering followed by a single final query. During the information gathering rounds, each query \( q_t \) is constructed using i.i.d.\ Bernoulli random variables: each element \( x_1^j \) in the targeted subset \(\chi_1\) independently takes the value 1 with probability \( p_t \sim \mathrm{Unif}[0,1] \), while all other elements in the domain receive value 0. Throughout the interaction, we track an accumulated score \( Z_j \) for each element \( x_1^j \), defined so that the score increment \( z_t^j \) has positive expectation if \( x_1^j \) matches the unique element appearing in the true sample, and zero expectation otherwise. After all \( k \) rounds, we identify the element with the highest cumulative score. By standard concentration arguments, this element is likely to be the one present in the actual sample, as it uniquely accumulates a positive expected score. We issue a final query that evaluates to \(1\) for the elements in the sample we identified and \(0\) for all other elements, thus pinpointing the true sample. Throughout the analysis, we assume \(\alpha < 1-\tfrac{1}{|\Supp({\DDD})|}\), ensuring that if the final query successfully identifies the true sample, the resulting deviation exceeds \(\alpha\).

\begin{algorithm}
\caption{Attack Procedure}
\label{alg:attack-procedure}
\textbf{Initialization:}  
Let \(\MMM\) be an NS mechanism initialized with a sample \(S \sim \mathcal{D}\). For each element \(x_1^j \in \chi_{1}\), initialize an accumulated score \(Z_j = 0\).

\textbf{Information gathering rounds:}  
For each round \(t \in [k]\):
\begin{enumerate}

    \item Sample \(p_t \sim \mathrm{Unif}[0,1]\).
    \item Define the query \(q_t : \chi \to \{0,1\}\) as:
    \[
    q_t(x) =
    \begin{cases}
    \sim \text{Bernoulli}(p_t) & \text{if } x \in \chi_1,\\
    0 & \text{otherwise}.
    \end{cases}
    \]
    \item Submit \(q_t\) to the mechanism and receive the response \(a_t\).
    \item For each \(x_1^j \in \chi_1\), define
    \(
    z_t^j = (a_t - p_t/r)\,(q_t(x_1^j) - p_t)\), and update 
    \(Z_j \leftarrow Z_j + z_t^j.
    \)
\end{enumerate}

\textbf{Final query:}  
After \(k\) rounds, compute
\(
j^* = \arg\max_j Z_j
\). 
Submit a final query \(q^* : \chi \to \{0,1\}\) by setting
\[
q^*(x) =
\begin{cases}
1 & \text{if } x = x_i^{j^*} \text{for } i \in [r],\\
0 & \text{otherwise}.
\end{cases}
\]
\end{algorithm}

\paragraph{Analysis of the attack.} We now prove that the attack described above succeeds using only \((\varepsilon, \gamma)\)-concentrated queries. This analysis establishes three components: (1) all \(k\) information-gathering queries are \((\varepsilon, \gamma)\)-concentrated, (2) the final attack query \(q^*\) is also \((\varepsilon, \gamma)\)-concentrated, and (3) with high probability, the attack correctly identifies the underlying sample.

\begin{lemma}
\label{lem:qt-concentrated}
Each query \(q_t\) in the first \(k\) rounds in the attack described in algorithm~\ref{alg:attack-procedure}  is \((\varepsilon,\gamma)\)-concentrated.
\end{lemma}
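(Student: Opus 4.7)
The plan is to exploit the fact that $q_t$ is supported entirely on $\chi_1$ together with the rigid structure of $\mathcal{D}$. Fix an arbitrary realization of the algorithm's internal randomness that defines $q_t$, i.e., fix the value of $p_t$ and the Bernoulli values $b_j := q_t(x_1^j) \in \{0,1\}$ for $j \in [\varepsilon N]$; by construction $q_t(x) = 0$ for all $x \notin \chi_1$. We show the stronger statement that $q_t$ is $(\varepsilon,0)$-concentrated w.r.t.\ $\mathcal{D}$, which in particular implies $(\varepsilon,\gamma)$-concentration.

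Since any sample in the support of $\mathcal{D}$ is indexed by a single $j \sim \mathrm{Unif}\{1,\ldots,\varepsilon N\}$ and consists of $\varepsilon n$ copies of $x_1^j$ together with $\varepsilon n$ copies of each $x_i^j$ for $i \ne 1$, and since $q_t$ vanishes outside $\chi_1$, the empirical mean simplifies to
\[
q_t(S) \;=\; \frac{1}{n}\cdot \varepsilon n \cdot b_j \;=\; \varepsilon\, b_j \;\in\; \{0,\varepsilon\}.
\]
Taking expectation over $j$ gives $q_t(\mathcal{D}) = \varepsilon\, \bar{b}$, where $\bar{b} := \frac{1}{\varepsilon N}\sum_{j'} b_{j'} \in [0,1]$. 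Consequently,
\[
\bigl|q_t(S) - q_t(\mathcal{D})\bigr| \;=\; \varepsilon\,|b_j - \bar{b}|.
\]

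The key observation is that $b_j$ is itself one of the summands averaged in $\bar{b}$, so $|b_j - \bar{b}| \le 1 - 1/(\varepsilon N) < 1$ whenever the $b_{j'}$'s are not all identical (and equals $0$ otherwise). Therefore $|q_t(S) - q_t(\mathcal{D})| < \varepsilon$ \emph{deterministically} over the choice of $S \sim \mathcal{D}$, so $\Pr_{S\sim \mathcal{D}}[|q_t(S) - q_t(\mathcal{D})| \ge \varepsilon] = 0 \le \gamma$, as desired.

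There is essentially no obstacle here: the lemma is a structural observation rather than a concentration-of-measure computation. The only thing to be careful about is that the argument must hold for \emph{every} realization of the random query construction (in particular for realizations where $\bar b$ is very close to $0$ or $1$), which is exactly what the strict inequality $|b_j-\bar b|<1$ provides.
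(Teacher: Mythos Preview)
Your proof is correct and follows essentially the same route as the paper: both observe that $q_t$ vanishes off $\chi_1$, so the empirical mean on any $S\in\Supp(\mathcal{D})$ lies in $\{0,\varepsilon\}$ while the true mean lies in $[0,\varepsilon]$, forcing the deviation to be at most $\varepsilon$. Your version is in fact slightly more careful than the paper's---you correctly fix the realization of the Bernoulli draws (so the true mean is $\varepsilon\bar b$ rather than the paper's $\varepsilon p_t$) and extract the strict inequality $|b_j-\bar b|\le 1-1/(\varepsilon N)<1$, which cleanly yields $(\varepsilon,0)$-concentration and handles the boundary case $\Pr[\,|q_t(S)-q_t(\mathcal{D})|\ge\varepsilon\,]$ without ambiguity.
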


\begin{proof}
Fix round \(t\) and any \(S' \in \Supp(\mathcal{D})\). The sample consists of \(\varepsilon n\) copies of \(x'_1\) of some element \(x'_1 \in \chi_1\), and \((n-\varepsilon n)\) elements from \(\chi\setminus \chi_1\). Since \(q_t(x) = 0\) for \(x \notin \chi_1\), we have \(q_t(S') = \varepsilon q_t(x_{1}^j)\), where \(q_t(x_{1}^j) \sim \mathrm{Bernoulli}(p_t)\). Therefore the empirical mean of any sample in \(\Supp(\DDD)\)is in \(\{0, \varepsilon\}\), and \(q_t(\DDD) \)the true mean is \(\varepsilon p_t \in [0, \varepsilon]\), so the absolute deviation is at most \(\varepsilon\).
\end{proof}

\begin{lemma}
\label{lem:qstar-concentrated}
The final query \(q^*\) in the attack described in algorithm~\ref{alg:attack-procedure} is \((\varepsilon,\gamma)\)-concentrated under \(\mathcal{D}\).
\end{lemma}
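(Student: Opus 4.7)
The plan is to exploit the very special structure of $q^*$: once we fix the index $j^*$ (which the definition of concentration treats as a parameter of the query), the empirical mean $q^*(S)$ over a fresh $S \sim \DDD$ takes only two values, so the concentration statement reduces to a direct two-point calculation.

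In detail, the first step is to note that $\Supp(\DDD)$ consists of exactly $\eps N$ samples, one per index $j \in \{1,\dots,\eps N\}$. Because the subsets $\chi_1,\dots,\chi_r$ are disjoint and the labels $x_i^j$ are distinct across different $j$'s, the $r$ marked elements $x_1^{j^*},\dots,x_r^{j^*}$ are disjoint from the multiset of the sample generated by index $j$ whenever $j \neq j^*$, whereas when $j = j^*$ every element of the sample is one of the marked elements. Hence $q^*(S) = 1$ if $j = j^*$ and $q^*(S) = 0$ otherwise, so the true mean is $q^*(\DDD) = \Pr[j = j^*] = 1/(\eps N)$.

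The final step is to substitute the choice $N = \max\{1/\eps^2,\, 1/(\eps\gamma)\}$, which yields $1/(\eps N) \leq \min\{\eps,\gamma\}$. The ``small'' case ($j \neq j^*$) has deviation $1/(\eps N) \leq \eps$, so it does not contribute to the tail event; the ``large'' case ($j = j^*$) has deviation $1 - 1/(\eps N)$ (close to $1$) but occurs with probability only $1/(\eps N) \leq \gamma$. Combining gives $\Pr_{S\sim\DDD}[|q^*(S)-q^*(\DDD)| \geq \eps] \leq \gamma$, which is exactly the required $(\eps,\gamma)$-concentration.

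I do not expect a substantive obstacle: the argument is essentially bookkeeping on a two-point distribution. The only minor subtlety is the boundary case $1/(\eps N) = \eps$, which arises precisely when $\gamma \geq \eps$ forces $N = 1/\eps^2$; this can be dispatched either by reading the concentration inequality strictly in this regime, by using a slightly inflated $N$, or by noting that the interesting regime for Theorem~\ref{thm:lower-bound-gamma} is $\gamma < \eps$.
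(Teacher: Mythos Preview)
Your proof is correct and follows essentially the same approach as the paper's: compute that $q^*(S)\in\{0,1\}$ depending on whether the fresh sample's index equals $j^*$, read off the true mean $q^*(\DDD)=1/(\eps N)$, and use $N=\max\{1/\eps^2,1/(\eps\gamma)\}$ to bound both the small-case deviation and the large-case probability. The boundary issue you flag (deviation exactly $\eps$ when $N=1/\eps^2$) is a real edge case the paper glosses over, but as you note it is irrelevant in the regime $\gamma<\eps$ that drives the main theorem.
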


\begin{proof}
Let \( T \sim \mathcal{D} \). The query \( q^* \) evaluates to 1 if 
\(T\) is the sample generated by choosing the \(j^*\)-th element in each subset \(\chi_i\), and 0 otherwise. Thus, the true mean \(q^*(\DDD) = \frac{1}{\varepsilon N} \). 
If \(q^*(T) = 0 \), the deviation is \( \frac{1}{\varepsilon N} \); if \(q^*(T) = 1 \), the deviation exceeds \( \varepsilon \) but this event occurs with probability \( \frac{1}{\varepsilon N} \). Since \( N = \max\left\{ \frac{1}{\varepsilon^2}, \frac{1}{\varepsilon \gamma} \right\} \), we have \( \frac{1}{\varepsilon N} \leq \varepsilon \) and \( \frac{1}{\varepsilon N} \leq \gamma \), so the deviation exceeds \( \varepsilon \) with probability at most \( \gamma \), meaning \( q^* \) is \((\varepsilon,\gamma)\)-concentrated.
\end{proof}

Next, we will show that the attack correctly identifies the true sample with high probability. We begin by proving a supporting lemma:

\begin{lemma}
\label{lem:signal-mean}
Let \( j_s \) denote the index of the elements that appear in the true sample \( S\) that is used by the mechanism. Define for each \( j \in \{1,\dots,\varepsilon N\} \):
\[
z_t^j = \left(a_t - \frac{p_t}{r}\right) \left(q_t(x_{1}^j) - p_t\right).
\]
Then for each \( j \in \{1,\dots,\varepsilon N\} \) and \( t \in \{1,\dots,k\} \),
\[
\mathbb{E}[z_t^j] =
\begin{cases}
\frac{1}{6r} & \text{if } j = j_s, \\
0 & \text{otherwise}.
\end{cases}
\]
\end{lemma}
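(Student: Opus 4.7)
The plan is to expand $z_t^j$ by using the noise-addition structure of the mechanism and then exploit the conditional independence of the random bits comprising $q_t$. Since $\MMM$ is a noise addition mechanism, we can write $a_t = q_t(S) + \eta_t$ where $\eta_t$ is drawn from a fixed distribution independently of the query randomness. Because $q_t$ vanishes outside $\chi_1$ and $S$ consists of $\varepsilon n$ copies of $x_1^{j_s}$ together with points in $\chi\setminus\chi_1$, we have $q_t(S) = \varepsilon\, q_t(x_1^{j_s}) = \tfrac{1}{r}\, q_t(x_1^{j_s})$. Substituting gives the clean decomposition
\[
z_t^j \;=\; \tfrac{1}{r}\bigl(q_t(x_1^{j_s}) - p_t\bigr)\bigl(q_t(x_1^j) - p_t\bigr) \;+\; \eta_t\bigl(q_t(x_1^j) - p_t\bigr).
\]

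Next, I would kill the noise term in expectation. Conditional on $p_t$, the bit $q_t(x_1^j)$ is $\mathrm{Bernoulli}(p_t)$, so $\mathbb{E}[q_t(x_1^j) - p_t \mid p_t] = 0$. Since $\eta_t$ is independent of both $p_t$ and the Bernoulli bits defining $q_t$, the tower rule yields $\mathbb{E}\bigl[\eta_t(q_t(x_1^j) - p_t)\bigr] = 0$. Thus it suffices to compute $\mathbb{E}\bigl[(q_t(x_1^{j_s}) - p_t)(q_t(x_1^j) - p_t)\bigr]$ and divide by $r$.

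For this I would split on whether $j = j_s$. If $j \ne j_s$, then conditional on $p_t$, the bits $q_t(x_1^j)$ and $q_t(x_1^{j_s})$ are independent Bernoulli$(p_t)$ variables, so the conditional expectation factors and each factor has mean zero, giving $\mathbb{E}[z_t^j] = 0$. If $j = j_s$, the product becomes $(q_t(x_1^{j_s}) - p_t)^2$, whose conditional expectation given $p_t$ is the variance $p_t(1-p_t)$. Integrating over $p_t \sim \mathrm{Unif}[0,1]$ gives $\int_0^1 p(1-p)\,dp = \tfrac{1}{2} - \tfrac{1}{3} = \tfrac{1}{6}$, which divided by $r$ yields $\tfrac{1}{6r}$ as claimed.

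I do not anticipate any real obstacle; the only point requiring mild care is the independence bookkeeping, namely justifying that $\eta_t$ is independent of $(p_t, q_t(x_1^j))$ so that the noise cross-term really vanishes, and that the Bernoulli bits $\{q_t(x_1^i)\}_i$ are mutually conditionally independent given $p_t$ (both of which follow from the construction in Algorithm~\ref{alg:attack-procedure} and the definition of a noise addition mechanism). Everything else is a one-line conditional expectation calculation.
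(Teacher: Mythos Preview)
Your proposal is correct and follows essentially the same approach as the paper: substitute $a_t=\tfrac{1}{r}q_t(x_1^{j_s})+\eta_t$, eliminate the noise cross-term by independence, and reduce to the Bernoulli variance $\mathbb{E}[p_t(1-p_t)]=\tfrac{1}{6}$. One small bonus: you kill the $\eta_t$ term via $\mathbb{E}[q_t(x_1^j)-p_t]=0$ rather than via $\mathbb{E}[\eta_t]=0$, so your argument does not need the noise to be zero-mean, whereas the paper's proof invokes that assumption explicitly.
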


\begin{proof}
Let \(S = (x_1^{j_s},\dots,x_1^{j_s},\dots, x_r^{j_s}\dots,x_r^{j_s})\) be the true sample that is used by the mechanism.
\emph{Case 1: \( j \ne j_s \).}  
Here \( q_t(x_{1}^j) \) is independent of \( a_t \), and \( \mathbb{E}[q_t(x_{1}^j)] = p_t \), so:
\[
  \mathbb{E}[z_t^j] = \mathbb{E}[(a_t - \tfrac{p_t}{r})(q_t(x_{1}^j) - p_t)] = 0.
\]

\emph{Case 2: \( j = j_s \).}  The analysis in this case depends on the type of mechanism used.

If \(\MMM\) is a \textbf{noise addition mechanism:} Substitute \( a_t = \tfrac{q_t(x_{1}^{j_s})}{r} + \eta_t \) into \( z_t^{j_s} \). Since \( \eta_t \) is zero-mean and independent,
\[
  \mathbb{E}[z_t^{j_s}] = \frac{1}{r} \, \mathbb{E}[(q_t(x_{1}^{j_s}) - p_t)^2].
\]
Because \( q_t(x_{1}^{j_s}) \sim \mathrm{Bernoulli}(p_t) \) and \( p_t \sim \mathrm{Unif}[0,1] \), this gives:
\[
  \mathbb{E}[z_t^{j_s}] = \frac{1}{r} \cdot \mathbb{E}[p_t(1 - p_t)] = \frac{1}{6r}.
\]

 If \(\MMM\) is a  \textbf{subsampling mechanism:}The output $a_t$ is the empirical mean of the query evaluated on a subsample $S_M$ of size $m$ which is constructed by drawing elements i.i.d.\ and uniformly from the original sample $S$. We can express the output as:
\[
a_t = q_t(S_M) = \frac{1}{m}\sum_{\ell=1}^m q_t(Y_\ell),
\]
where each $Y_\ell$ is drawn uniformly from $S$. By construction, the original sample $S$ (of size $n$) contains exactly $n/r$ copies of $x_1^{j_s}$. Therefore, the probability that any given draw $Y_\ell$ is equal to $x_1^{j_s}$ is:
\[
\Pr(Y_\ell = x_1^{j_s})=\frac{n/r}{n}=\frac{1}{r}.
\]
Furthermore, $q_t(Y_\ell)=0$ whenever $Y_\ell \neq x_1^{j_s}$. Let $C$ be the number of times $x_1^{j_s}$ is drawn into $S_M$, so $C \sim \mathrm{Bin}(m, 1/r)$. The output can then be simplified to:
\[
a_t = \frac{C}{m}\,q_t(x_1^{j_s}),
\]
as all other points sampled into $S_M$ contribute zero to the sum. We first compute the expectation conditioned on $p_t$ and $C$:
\[
\begin{aligned}
\mathbb{E}\!\left[(a_t-\tfrac{p_t}{r})(q_t(x_1^{j_s})-p_t)\,\middle|\,p_t,C\right]
&= \mathbb{E}\!\left[\left(\frac{C}{m}q_t(x_1^{j_s})-\frac{p_t}{r}\right)(q_t(x_1^{j_s})-p_t)\,\middle|\,p_t,C\right] \\
&= \mathbb{E}\!\left[\frac{C}{m}\,q_t(x_1^{j_s})\bigr(q_t(x_1^{j_s})-p_t\bigl)\,\middle|\,p_t\right]
   \;-\; \mathbb{E}\!\left[\frac{p_t}{r}\,\bigr(q_t(x_1^{j_s})-p_t\bigl)\,\middle|\,p_t\right] \\
&= \frac{C}{m}\,p_t(1-p_t).
\end{aligned}
\]
Next, taking the expectation over the binomial random variable $C$ yields the expectation conditioned only on $p_t$:
\[
 \mathbb{E}[z_t^{j_s}\mid p_t]
= \mathbb{E}\!\left[\frac{C}{m}\right] p_t(1-p_t)
= \frac{1}{r}\,p_t(1-p_t).
\]
Thus, in the subsampling case as well, the final expectation is:
\[
\mathbb{E}[z_t^{j_s}] = \frac{1}{r}\,\mathbb{E}[p_t(1-p_t)] = \frac{1}{6r}.
\]
\end{proof}

\begin{theorem}
\label{thm:lower-bound-k}
For the attack to identify the true sample with probability at least \(1 - \beta\), it suffices that
\(
  k = \Omega\left(
    \frac{1}{\varepsilon^2}
    \left[
      \max\left\{
        \ln\left(\frac{1}{\varepsilon}\right),\,
        \ln\left(\frac{1}{\gamma}\right)
      \right\}
      +
      \ln\left(\frac{1}{\beta}\right)
    \right]
  \right)
\).
\end{theorem}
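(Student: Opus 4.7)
The plan is a standard signal-detection analysis: apply Hoeffding's inequality to each accumulated score $Z_j = \sum_{t=1}^k z_t^j$ separately and take a union bound over the $\varepsilon N = \max\{1/\varepsilon,\, 1/\gamma\}$ possible indices. By Lemma~\ref{lem:signal-mean}, $\mathbb{E}[Z_{j_s}] = k/(6r)$ while $\mathbb{E}[Z_j] = 0$ for every $j \ne j_s$. Hence it suffices to show that with probability at least $1-\beta$ each $Z_j$ stays within $k/(12r)$ of its mean; then $Z_{j_s} > Z_j$ for every $j \ne j_s$, so $\arg\max_j Z_j = j_s$, the final query $q^*$ picks out exactly the support element of the true sample, and by Lemma~\ref{lem:qstar-concentrated} together with the assumption $\alpha < 1 - 1/|\Supp(\mathcal{D})|$, the mechanism's deviation on $q^*$ must exceed $\alpha$.

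To apply Hoeffding I need two ingredients. First, boundedness of each increment: invoking the clipping convention from the preliminaries, $a_t \in [0,1]$ up to a negligible-probability event that I absorb into $\beta$, and since $p_t/r,\, p_t,\, q_t(x_1^j) \in [0,1]$ I get $|z_t^j| \le 1$ uniformly in the noise magnitude chosen by the mechanism. Second, independence across rounds: conditioning on the true sample $S$ (equivalently on $j_s$), the tuples $(p_t,\, \{q_t(x)\}_{x \in \chi_1},\, \eta_t)$ at distinct rounds are drawn freshly and mutually independently, because the attacker's per-round randomness is fresh and the mechanism's noise $\eta_t$ is independent across rounds; moreover $a_t$ is a function only of $q_t(x_1^{j_s})$ (through $q_t(S) = q_t(x_1^{j_s})/r$) and $\eta_t$. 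So for each fixed $j$ the variables $z_1^j,\dots,z_k^j$ are independent and bounded.

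Hoeffding then yields $\Pr[\,|Z_j - \mathbb{E}[Z_j]| \ge k/(12r)\,] \le 2\exp(-k/(288\, r^2))$; union-bounding over $\varepsilon N$ indices gives total failure probability at most $2\varepsilon N \cdot \exp(-k/(288\, r^2))$. Requiring this to be at most $\beta$ and plugging in $r = 1/\varepsilon$ and $\varepsilon N = \max\{1/\varepsilon,\, 1/\gamma\}$ yields the bound $k = \Omega\bigl(\tfrac{1}{\varepsilon^2}\bigl[\max\{\ln(1/\varepsilon),\,\ln(1/\gamma)\} + \ln(1/\beta)\bigr]\bigr)$ stated in the theorem.

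I expect no genuine obstacle here. The two things to be careful about are (i) invoking the clipping convention so that boundedness of $z_t^j$ holds regardless of how much noise the mechanism injects — otherwise a mechanism that adds unboundedly large noise could in principle inflate the variance of $Z_j$ — and (ii) noting that conditioning on the sample decouples the rounds, which is what lets Hoeffding kick in. With these in place the proof reduces to combining Lemma~\ref{lem:signal-mean} with a textbook concentration-plus-union-bound argument.
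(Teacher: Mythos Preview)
Your proposal is correct and follows essentially the same route as the paper: bound each increment using the clipping assumption, invoke Lemma~\ref{lem:signal-mean} for the means, apply Hoeffding across the $k$ independent rounds, and union-bound over the $\varepsilon N$ candidate indices. The only cosmetic difference is that the paper applies Hoeffding directly to the differences $W^j = Z_{j_s} - Z_j$ (one tail bound per competitor), whereas you bound each $Z_j$ around its own mean and separate via the midpoint $k/(12r)$; both yield the same $\exp(-\Omega(k\varepsilon^2))$ deviation probability and the same final bound.
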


\begin{proof}
Let \(j_s\) denote the index of the true sample fed to the mechanism. Define for each index \(j\) the cumulative variable \(Z_j = \sum_{t=1}^k z_t^j\), where
\[
  z_t^j = \left(a_t - \frac{p_t}{r}\right)\left(q_t(x_1^j) - p_t\right).
\]

According to lemma~\ref{lem:signal-mean}:
\[
  \mathbb{E}[z_t^j] =
  \begin{cases}
    \frac{1}{6r} & \text{if } j = j_s, \\
    0 & \text{otherwise.}
  \end{cases}
\]
For each \(j \ne j_s\), define the difference \(W_t^j = z_t^{j_s} - z_t^j\), and let \(W^j = \sum_{t=1}^k W_t^j = Z_{j_s} - Z_j\).
Note that for a fixed \(j\), the variables \(z_1^{j},\dots z_k^{j}\) are i.i.d., and so are the variables \(W_1^j,\dots,W_k^j\).

By assumption, mechanism outputs are bounded in a fixed interval. Since \(p_t, q_t(x) \in [0,1]\), each term \(|W_t^j|\) is bounded. Also, \(\mathbb{E}[W_t^j] = \frac{1}{6r}\).

Applying Hoeffding’s inequality to \(W^j\), we get:
\[
  \Pr[W^j \le 0] \le \exp\left(-\frac{k}{C r^2}\right),
\]
for some constant \(C\).

We compare each alternative index \(j \ne j_s\) to the true one by checking whether \(Z_j \ge Z_{j_s}\), which occurs exactly when \(W^j \le 0\). By a union bound over the \(N/r - 1\) such indices:
\[
  \Pr\left[\exists j \ne j_s : Z_j \ge Z_{j_s}\right] \le \frac{N}{r} \cdot \exp\left(-\frac{k}{C r^2}\right).
\]

To ensure that the attack identifies the true sample with probability of at least \(1-\beta\), it suffices that
\[
  \frac{N}{r} \cdot \exp\left(-\frac{k}{C r^2}\right) \le \beta
  \quad \Rightarrow \quad
  k \ge C r^2 \cdot \ln\left(\frac{N}{r\beta}\right).
\]

Substituting \(r = 1/\varepsilon\) and using the definition of \( N = \max\left\{ \frac{1}{\varepsilon^2}, \frac{1}{\varepsilon \gamma} \right\} \), we obtain

\[
  k = \Omega\left(
    \frac{1}{\varepsilon^2}
    \left[
      \max\left\{
        \ln\left(\frac{1}{\varepsilon}\right),\,
        \ln\left(\frac{1}{\gamma}\right)
      \right\}
      +
      \ln\left(\frac{1}{\beta}\right)
    \right]
  \right),
\]
\end{proof}

Combining Theorem~\ref{thm:lower-bound-k} and the attack in Appendix~\ref{app:gamma-counterexample}, we get:
\begin{theorem}
\label{thm:lower-bound}
Let \(\varepsilon > 0\), \(\gamma \in (0,1]\), and \(\beta \in (0,1)\). Then there exists a domain \(\chi\) and a distribution \(\mathcal{D}\) over \(\chi^n\) such that for any NS mechanism \(\MMM\), and any \(\alpha \in \bigl(0, 1 - \tfrac{1}{|\mathrm{Supp}(\mathcal{D})|}\bigr)\), there exists an analyst issuing \(k\) adaptive \((\varepsilon, \gamma)\)-concentrated queries with
\(
  k = \Omega\left(\min\left\{
    \frac{1}{\gamma},\;
      \frac{1}{\varepsilon^2}
\ln\left(\frac{1}{\varepsilon\cdot\beta\cdot \gamma}\right)
  \right\}\right),
\)
such that
\(
  \Pr\left[
    \exists\, i \in [k] \text{ such that } \left| M(q_i)-q_i(\DDD) \right| > \alpha
  \right] \ge 1 - \beta.
\) 
\end{theorem}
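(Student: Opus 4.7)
The theorem packages two independent attacks, so the plan is simply to invoke whichever one gives the better bound for the given parameter regime. The main attack (Algorithm~\ref{alg:attack-procedure}), together with Lemmas~\ref{lem:qt-concentrated}, \ref{lem:qstar-concentrated} and Theorem~\ref{thm:lower-bound-k}, already yields one direction: on the partition distribution of Section~\ref{sec:negative} with $|\Supp(\DDD)|=\varepsilon N$, a stream of $k = \Omega\bigl(\varepsilon^{-2}\ln(1/(\varepsilon\beta\gamma))\bigr)$ $(\varepsilon,\gamma)$-concentrated queries lets the attacker recover the secret index $j_s$ with probability $\geq 1-\beta$, after which the terminal indicator query $q^{*}$ has true mean $1/|\Supp(\DDD)|$ while the mechanism is forced to answer close to $1$, producing a deviation exceeding any $\alpha < 1-1/|\Supp(\DDD)|$. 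The appendix attack supplies the other direction: when $\gamma<\varepsilon$, sequentially issuing the $1/\gamma$ indicator queries on the distribution uniform over $1/\gamma$ disjoint samples suffices, since whichever sample is actually drawn, its own indicator has true mean $\gamma$ but empirical mean $1$, hence deviates by more than $\alpha$ as soon as $\alpha < 1-\gamma = 1-1/|\Supp(\DDD)|$.

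To obtain the stated bound I would fix the promised domain and distribution depending on which of the two quantities $1/\gamma$ or $\varepsilon^{-2}\ln(1/(\varepsilon\beta\gamma))$ is smaller. If the former is smaller \emph{and} $\gamma < \varepsilon$ (so that the appendix queries are genuinely $(\varepsilon,\gamma)$-concentrated), commit to the appendix distribution with $|\Supp(\DDD)| = 1/\gamma$ and run its $O(1/\gamma)$-query attack. Otherwise commit to the partition distribution from Section~\ref{sec:negative} and run Algorithm~\ref{alg:attack-procedure} with $k = \Omega\bigl(\varepsilon^{-2}\ln(1/(\varepsilon\beta\gamma))\bigr)$ information-gathering rounds followed by $q^{*}$. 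In the ``otherwise'' branch triggered by $\gamma \geq \varepsilon$, one has $1/\gamma \leq 1/\varepsilon \leq \varepsilon^{-2}\ln(1/(\varepsilon\beta\gamma))$ in any nontrivial range, so the $\Omega(\min\{\cdots\})$ lower bound on $k$ is satisfied in both sub-branches.

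There is essentially no new probabilistic content here — everything has already been done in Theorem~\ref{thm:lower-bound-k} and the appendix — so the only subtlety is the quantifier structure. The domain and distribution must be chosen before the mechanism $M$ and the accuracy target $\alpha$ are revealed, but the statement permits them to depend on $\varepsilon$ and $\gamma$, which is exactly what the branching above uses. The main obstacle I anticipate is not mathematical but notational bookkeeping around the hypothesis $\alpha < 1-1/|\Supp(\DDD)|$: the support sizes of the two candidate distributions differ, so one has to be careful to verify that this hypothesis, interpreted relative to whichever distribution was chosen, is indeed enough to turn correct sample recovery into an $\alpha$-violation on the terminating indicator query. This follows immediately because in both constructions the terminating indicator singles out exactly one element of $\Supp(\DDD)$, so its true mean is $1/|\Supp(\DDD)|$ while the mechanism is compelled to report an answer near $1$.
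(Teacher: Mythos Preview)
Your proposal is correct and takes essentially the same approach as the paper, whose own proof of this theorem is the single line ``Combining Theorem~\ref{thm:lower-bound-k} and the attack in Appendix~\ref{app:gamma-counterexample}, we get:''. Your write-up is in fact more careful than the paper's: you explicitly handle the branching on whether $\gamma<\varepsilon$ (needed for the appendix queries to be $(\varepsilon,\gamma)$-concentrated), you track the quantifier order so that the chosen distribution depends only on $(\varepsilon,\gamma,\beta)$, and you verify that the hypothesis $\alpha<1-1/|\Supp(\DDD)|$ translates correct sample recovery into an $\alpha$-violation on the terminal indicator in both constructions.
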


\paragraph{Comparison to the i.i.d. setting}
We now compare our query bound to the classical i.i.d.\ setting, where differentially private mechanisms can answer up to \(O(n^2)\) adaptive statistical queries with bounded error. In sharp contrast, our results imply a strong negative statement: even if the query concentration matches the behavior of Hoeffding’s inequality for i.i.d.\ samples, the number of accurately answerable queries by NS mechanisms is tightly bounded by \(O(n)\).
To make this comparison precise, we assume a fixed failure probability (e.g. \(\beta=0.01\)), and let the concentration rate \(\gamma(n, \varepsilon)\) follow the double-sided Hoeffding bound: \(\gamma(n, \varepsilon) = 2 \exp(-2n\varepsilon^2)\). Under this assumption, the bound from theorem~\ref{thm:lower-bound} simplifies to \(k = O(n)\) for any \(\varepsilon \in (0,1]\). The full derivation is a straightforward asymptotic calculation, deferred to Appendix~\ref{app:iid-comparison}.

\section{A simplified positive result}
\label{sec:positive}

\subsection{Setup and definitions} 

We introduce a thought experiment involving three mechanisms, all initialized with the same sample \(S \sim \mathcal{D}\), all interacting with the same adaptive analyst $\AAA$ over \(k\) rounds.

A sample \(S \sim \mathcal{D}\) is drawn once and remains hidden from the analyst. The analyst \(\mathcal{A}\) issues a sequence of \(k\) statistical queries \(q_1, \dots, q_k : \chi \to [0,1]\), where each query may depend adaptively on previous queries and responses; \(\mathcal{A}\) is assumed to be deterministic without loss of generality, as randomized analysts can be treated by taking expectation over a distribution of deterministic strategies. 

The mechanism's responses are based on one of three strategies: the real-world mechanism adds Laplace noise to the empirical mean \(\MMM_S(q_i) = q_i(S) + \eta_i\), where \(\eta_i \sim \mathrm{Laplace}(0,b)\); the oracle mechanism adds Laplace noise to the true mean \(\MMM_O(q_i) = q_i(\mathcal{D}) + \eta'_i\), where \(\eta'_i \sim \mathrm{Laplace}(0,b)\); and the hybrid mechanism responds as the real-world mechanism while all past queries are \(\varepsilon\)-concentrated relative to \(S\), but switches to the oracle mechanism once any empirical mean deviates by more than \(\varepsilon\) from the true mean: \( \MMM_H(q_i) = \begin{cases} q_i(S) + \eta_i, & \text{if } \max_{j \le i} \bigl| \hat{q_j}(S) - q_j(\mathcal{D}) \bigr| \le \varepsilon, \\ q_i(\mathcal{D}) + \eta'_i, & \text{otherwise}. \end{cases} \)

To describe the interaction between the analyst and the mechanism, we define the \emph{transcript} \( t = (q_1, a_1, \dots, q_k, a_k) \), which records the sequence of queries and their corresponding responses. Each answer \(a_i\) is given by either \(\MMM_S(q_i)\), \(\MMM_O(q_i)\), or \(\MMM_H(q_i)\), depending on the mechanism being used in the interaction.

\begin{definition}[Transcript]
\label{def:transcript}
Let \(\mathcal{A}\) be an analyst interacting with a mechanism over \(k\) rounds.  The random transcript \(T\) is the sequence of queries and responses generated in the interaction. A particular outcome is denoted by
\(
  t = (q_1,a_1,\dots,q_k,a_k)
\).
\end{definition}

\textbf{Transcript Probability Notation.}
Let \( t \) be a transcript arising from an interaction between an analyst \( \mathcal{A} \) and a mechanism \( \MMM \). We denote the probability of \( t \) arising under mechanism \( \MMM \) as \( \Pr_{M}(T = t) \), where \( \Pr_{\MMM_S}(T = t) \), \( \Pr_{\MMM_O}(T = t) \), and \( \Pr_{\MMM_H}(T = t) \) refer to the probabilities under the real-world, oracle, and hybrid mechanisms, respectively.

\begin{definition}[Support of transcripts]
\label{def:support-samples-transcripts}
Let \(\mathcal{D}\) be the data distribution over \(\chi^n\), and let \(T\) be the random transcript produced by an interaction (with any of the mechanisms) with an analyst \(\mathcal{A}\). We define:
\(
  \mathcal{T}_{\mathcal{A}}
  = \{\, t : \Pr[T = t] > 0 \,\}.
\)
\end{definition}

\begin{remark}
The support \(\mathcal{T}_{\mathcal{A}}\) depends only on the analyst \(\mathcal{A}\), not on the mechanism. This is true because all mechanisms respond by adding independent Laplace noise to either \(q_i(S)\) or \(q_i(\DDD)\), and, by assumption, the output space \(\mathcal{Y}\) is finite. Therefore, for any fixed query \(q_i\), every output \(a_i \in \mathcal{Y}\) occurs with positive probability under all mechanisms. As a result, the transcript \(t = (q_1, a_1, \dots, q_k, a_k)\) has nonzero probability under each mechanism if and only if it is possible under the analyst's query selection behavior.
\end{remark}

To analyze the outcome of the interaction, we define two categories of "good" events:  (1) \emph{Statistical accuracy}: This event contains all transcripts $t$ such that all answers in $t$ are close to the true means of their respective queries.  (2)  \emph{Sample concentration}: This event contains all pairs of transcripts $t$ and samples $S$ such that the empirical mean on $S$ of each query in $t$ is close to its true mean on $\DDD$. Note that {\em statistical accuracy} is a property of the mechanism’s outputs and their deviation from the true means, independent of the sample; {\em sample concentration}, by contrast, depends on both transcript and sample as it reflects how well the empirical means align with the true expectations.

\begin{definition}[\(\alpha\)-accurate transcript]
\label{def:alpha-accurate-transcript}
A transcript \( t = (q_1,a_1,\dots,q_k,a_k)\) is \emph{\(\alpha\)-accurate} if every response \(a_i\) is within \(\alpha\) of the true mean \(q_i(\DDD)\); that is: \(|a_i - q_i(\DDD)| \le \alpha\) for all \(i \in [k]\).
\end{definition}

\begin{definition}[\(\varepsilon\)-good pair \((S,t)\)]
\label{def:good-pair}
Let \(S \in \Supp(\mathcal{D})\) be a sample and let \(t = (q_1, a_1,\, \dots,\, q_k, a_k)\) be a transcript of \(k\) queries and responses. The pair \((S, t)\) is called \emph{\(\varepsilon\)-good} if, for every query \(q_i\) in \(t\), the empirical mean of \(q_i\) on \(S\) is close to its true mean:
\(
  \bigl| q_i(S) - q_i(\DDD) \bigr| \;\le\; \varepsilon
\).
\end{definition}

Our strategy involves demonstrating that the probability of sample concentration events occurring is similar across the real-world, oracle, and hybrid mechanisms. By establishing this, we can infer that events satisfying both statistical accuracy and sample concentration—which occur with high probability under the oracle mechanism—also occur with high probability under the real-world mechanism. Thus ensuring that the real-world mechanism maintains statistical accuracy despite the adaptivity of the analyst.

\subsection{Relating the distribution of events under the oracle and hybrid mechanisms} The first component of our analysis shows that the output distributions of the oracle and hybrid mechanisms are closely aligned, similarly to the guarantees provided by \((\varepsilon, 0)\)-differential privacy for neighboring datasets.

\begin{lemma}
\label{lem:hybrid-singleq-divergence}
Let \( S \in \Supp(\mathcal{D})\) be a sample, and let \( q \) be any query. Then for every measurable set in the output space \(\mathcal{E} \subseteq \mathcal{Y}\):
\(
\Pr\bigl[\MMM_H(q)\in \mathcal{E}\bigr]
  \;\le\; e^{\frac{\varepsilon}{b}}\,\Pr\bigl[\MMM_O(q)\in \mathcal{E}\bigr]\) and
\(
    \Pr\bigl[\MMM_O(q)\in \mathcal{E}\bigr]
  \;\le\; e^{\frac{\varepsilon}{b}}\,\Pr\bigl[\MMM_H(q)\in \mathcal{E}\bigr]
\).
\end{lemma}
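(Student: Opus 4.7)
The plan is to reduce this to the standard single-query privacy guarantee of the Laplace mechanism: shifting the center of a $\Lap(0,b)$ distribution by at most $\varepsilon$ perturbs the density at every point by a multiplicative factor of at most $e^{\varepsilon/b}$. I would fix the sample $S$, the query $q$, and the (implicit) history of earlier queries and responses that determines whether the hybrid is still in ``real-world mode'' or has already switched to ``oracle mode'' by the time it must answer $q$, and then split into these two cases.

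In the oracle-mode case, by definition $M_H(q) = q(\mathcal{D}) + \eta'$ with $\eta' \sim \Lap(0,b)$, which is exactly the distribution of $M_O(q)$; both inequalities hold with equality and there is nothing to do. In the real-world-mode case, the hybrid's switching rule forces $|q(S) - q(\mathcal{D})| \leq \varepsilon$, and $M_H(q) = q(S) + \eta$ while $M_O(q) = q(\mathcal{D}) + \eta'$ with $\eta,\eta' \sim \Lap(0,b)$. For any output value $y$ in the (finite, clipped) output space $\mathcal{Y}$, the pointwise density ratio satisfies
\[
\frac{f_{\Lap(0,b)}(y - q(S))}{f_{\Lap(0,b)}(y - q(\mathcal{D}))}
\;=\; \exp\!\left(\frac{|y - q(\mathcal{D})| - |y - q(S)|}{b}\right)
\;\leq\; \exp\!\left(\frac{|q(S) - q(\mathcal{D})|}{b}\right)
\;\leq\; e^{\varepsilon/b}
\]
by the reverse triangle inequality, and the reciprocal ratio is bounded in the same way by symmetry. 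Summing this pointwise bound over $y \in \mathcal{E}$ yields both stated inequalities.

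The main obstacle is really just clean bookkeeping rather than any technical difficulty. The argument is a textbook application of the Laplace mechanism's privacy guarantee; the key structural observation---built directly into the definition of $M_H$---is that whenever the hybrid answers as the real-world mechanism, the shift $|q(S) - q(\mathcal{D})|$ is automatically at most $\varepsilon$, which is precisely the ``sensitivity'' needed to invoke the Laplace bound, while in the complementary mode the two mechanisms are coupled exactly. The finite-precision/output-clipping assumption does not interfere, since the pointwise density-ratio bound remains summable over any subset $\mathcal{E}$ of the finite output space $\mathcal{Y}$.
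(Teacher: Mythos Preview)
Your proposal is correct and takes essentially the same approach as the paper: split into the case where the hybrid has already switched to oracle mode (so $M_H(q)=M_O(q)$ trivially) versus the case where it answers as the real-world mechanism, in which case the switching rule guarantees $|q(S)-q(\mathcal{D})|\le\varepsilon$ and the standard Laplace density-ratio bound applies. The only cosmetic difference is that the paper phrases the discretization step as integrating the density over the interval $I_y$ corresponding to each output symbol $y\in\mathcal{Y}$, whereas you speak of summing the pointwise bound; both amount to the same thing since a pointwise ratio bound on the densities is preserved under integration over any cell.
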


\begin{proof}
If \( \MMM_H(q) = \MMM_O(q) \), the probabilities are equal. Otherwise, since \( \MMM_H(q) = \MMM_S(q) \) and \( \bigl| q(S) - q(\DDD) \bigr| \le \varepsilon \), we have
\(
  \MMM_H(q) \sim \mathrm{Laplace}(q(S), b)\) and 
  \(\MMM_O(q) \sim \mathrm{Laplace}(q(\DDD), b)
\).
The two distributions differ only in location by at most \( \varepsilon \). Therefore, their density ratio is bounded:
\(
  \tfrac{p_{\MMM_H(q)}(y)}{p_{\MMM_O(q)}(y)} \le \exp(\varepsilon/b)\) for all \(y \in \mathbb{R}\), and similarly for the reverse ratio.
Assuming the mechanism outputs are discretized to a finite set \( \mathcal{Y} \subset \mathbb{R} \) by rounding to fixed precision, each output value \( y \in \mathcal{Y} \) corresponds to an interval \( I_y \subset \mathbb{R} \). Integrating over these intervals preserves the density ratio bound, yielding the stated probability bounds.
\end{proof}

\textbf{Extending advanced composition to our setting.}
The preceding lemma allows us to extend the advanced composition analysis of \cite{DworkRV10} (see also \cite{DR14}) to our framework. One of their results shows that if an \((\varepsilon,0)\)-differentially private mechanism interacts with an analyst over \(k\) rounds, then for any \(\delta' > 0\), the overall interaction is \((\varepsilon^*,\delta')\)-differentially private, where
\(\varepsilon^*\approx \sqrt{k}\varepsilon\).
Although this theorem is framed in terms of differential privacy and neighboring datasets, the proof relies solely on the following: in each round, the conditional distributions of the outputs in two parallel experiments \(Y\) and \(Z\) given identical histories up to the previous round satisfy that for any
\(E \subseteq \mathrm{Supp}(Y)\), it holds that \(\ln\left(\frac{\Pr[Y \in E]}{\Pr[Z \in E]}\right) \le \varepsilon
\), and similarly for the reverse ratio. In our setting, Lemma~\ref{lem:hybrid-singleq-divergence} implies that this condition holds for any interaction of a fixed analyst with the oracle and hybrid mechanisms once you condition on identical histories. We now formalize the corresponding composition theorem in our framework and, for completeness, supply a full proof in appendix~\ref{app:div-full-proof} that mimics the proof of~\cite[Theorem  III.1]{DworkRV10} to demonstrate that it applies under our conditions.

\begin{theorem}
\label{thm:hybrid-oracle-divergence}
Let \( S \in \Supp(\mathcal{D})\) be a sample, and let \( \mathcal{A} \) be a fixed analyst. Consider two \(k\)-round interactions with \(\mathcal{A}\): one with the hybrid mechanism \(\MMM_H\), and one with the oracle mechanism \(\MMM_O\). For any \(\rho > 0\), define \(
  \varepsilon^* = \sqrt{2k \ln\!\left(\frac{1}{\rho}\right)}\,\frac{\varepsilon}{b} + k\,\frac{\varepsilon}{b}\,\bigl(e^{\varepsilon/b} - 1\bigr)
\). Then
\[
  \Pr_{t\leftarrow \MMM_H}\left[
    \ln \frac{\Pr_{\MMM_H}(T = t)}{\Pr_{\MMM_O}(T = t)} > \varepsilon^*
  \right] \le \rho,
  \quad \text{and} \quad
  \Pr_{t\leftarrow \MMM_O}\left[ 
    \ln \frac{\Pr_{\MMM_O}(T = t)}{\Pr_{\MMM_H}(T = t)} > \varepsilon^*
  \right] \le \rho
\]
\end{theorem}

From this, we conclude the following corollary:
\begin{corollary}
\label{lem:probability-bounds-hybrid-oracle}
Let \(\mathcal{A}\) be a fixed analyst, \(S \in \Supp(\mathcal{D})\) a sample, and let \(\mathcal{E}\) be any event that can arise in the interaction with the analyst.  For any \(\rho>0\), define \(\varepsilon^*\)
as in Theorem~\ref{thm:hybrid-oracle-divergence}.  Then 
\(
  e^{-\varepsilon^*}\,(\Pr_{\MMM_O}[\mathcal{E}]-\rho)
  \;\le\;
  \Pr_{\MMM_H}[\mathcal{E}]
  \;\le\;
  e^{\varepsilon^*}\,\Pr_{\MMM_O}[\mathcal{E}]\;+\;\rho
\).
\end{corollary}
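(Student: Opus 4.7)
The plan is to derive the corollary from Theorem~\ref{thm:hybrid-oracle-divergence} by the standard argument that converts a high-probability bound on the log-ratio of two distributions into an approximate multiplicative bound on the probabilities of events. Concretely, I would introduce the ``bad'' transcript sets
\[
  B_H = \Bigl\{ t \in \mathcal{T}_{\mathcal{A}} : \ln \tfrac{\Pr_{M_H}(T=t)}{\Pr_{M_O}(T=t)} > \varepsilon^* \Bigr\},
  \qquad
  B_O = \Bigl\{ t \in \mathcal{T}_{\mathcal{A}} : \ln \tfrac{\Pr_{M_O}(T=t)}{\Pr_{M_H}(T=t)} > \varepsilon^* \Bigr\},
\]
and invoke Theorem~\ref{thm:hybrid-oracle-divergence} to conclude $\Pr_{M_H}[B_H] \le \rho$ and $\Pr_{M_O}[B_O] \le \rho$. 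Outside of these bad sets the pointwise ratio of transcript probabilities is bounded by $e^{\varepsilon^*}$ in the corresponding direction.

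For the upper bound, I would split $\Pr_{M_H}[\mathcal{E}]$ according to whether the realized transcript lies in $B_H$ or not: the contribution from $B_H$ is at most $\rho$ by the theorem, while the contribution from $\mathcal{E}\setminus B_H$ can be bounded transcript-by-transcript using $\Pr_{M_H}(T=t) \le e^{\varepsilon^*} \Pr_{M_O}(T=t)$, yielding at most $e^{\varepsilon^*} \Pr_{M_O}[\mathcal{E}]$. Summing gives the desired upper bound.

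For the lower bound, I would apply the same decomposition but in the opposite direction, starting from $\Pr_{M_O}[\mathcal{E}]$: the piece inside $B_O$ is at most $\rho$, and the piece outside $B_O$ is at most $e^{\varepsilon^*} \Pr_{M_H}[\mathcal{E}]$ by the pointwise bound on $B_O^c$. Rearranging $\Pr_{M_O}[\mathcal{E}] \le \rho + e^{\varepsilon^*} \Pr_{M_H}[\mathcal{E}]$ gives $\Pr_{M_H}[\mathcal{E}] \ge e^{-\varepsilon^*}(\Pr_{M_O}[\mathcal{E}] - \rho)$, which is precisely the stated lower bound.

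I do not anticipate a real obstacle here; the only subtlety worth flagging is that the transcript space is discrete (as noted in the discussion of finite-precision outputs and in the remark on $\mathcal{T}_{\mathcal{A}}$), so summing $\Pr_M(T=t)$ over $t \in \mathcal{E}$ is legitimate and both ratios in the definitions of $B_H, B_O$ are well defined because every transcript in $\mathcal{T}_{\mathcal{A}}$ has positive probability under each of the three mechanisms. This ensures the pointwise-ratio step goes through without measure-theoretic complications, and the argument is essentially identical to the standard conversion from approximate max-divergence to $(\varepsilon,\delta)$-indistinguishability used in the DP literature.
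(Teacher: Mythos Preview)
Your proposal is correct and mirrors the paper's own proof essentially verbatim: the paper also introduces the bad set $\mathcal{B}=\{t:\ln\frac{\Pr_{M_H}(T=t)}{\Pr_{M_O}(T=t)}>\varepsilon^*\}$, bounds $\Pr_{M_H}[\mathcal{E}\cap\mathcal{B}]\le\rho$ and $\Pr_{M_H}[\mathcal{E}\cap\mathcal{B}^c]\le e^{\varepsilon^*}\Pr_{M_O}[\mathcal{E}]$, and then says ``exchanging roles of $M_H$ and $M_O$'' for the lower bound, which is exactly your $B_O$ argument. Your additional remark about the discrete transcript space and positivity under all mechanisms is a welcome clarification but not a departure from the paper's approach.
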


\begin{proof}
Let
\(
  \mathcal{B}
  \;=\;
  \bigl\{t\in\mathcal{T_A}:
    \ln\tfrac{\Pr_{\MMM_H}(T=t)}{\Pr_{\MMM_O}(T=t)}
    >\varepsilon^*\bigr\}
\) be the “bad’’ event where the likelihood‐ratio bound fails.  By
Theorem~\ref{thm:hybrid-oracle-divergence}, 
\(\Pr_{\MMM_H}[\mathcal{B}]\) is at most \(\rho\). Hence
\[\Pr_{\MMM_H}[\mathcal{E}]
  = \Pr_{\MMM_H}[\mathcal{E}\cap\mathcal{B}]
    + \Pr_{\MMM_H}[\mathcal{E}\cap\mathcal{B}^c] 
    \le \rho +
  e^{\varepsilon^*}\,\Pr_{\MMM_O}[\mathcal{E}\cap\mathcal{B}^c]
  \;\le\;
  \rho + e^{\varepsilon^*}\,\Pr_{\MMM_O}[\mathcal{E}].
\]

Exchanging roles of \(\MMM_H\) and
\(\MMM_O\) yields the lower bound \(\Pr_{\MMM_H}[\mathcal{E}]\ge e^{-\varepsilon^*}(\Pr_{\MMM_O}[\mathcal{E}]-\rho\)).
\end{proof}

\subsection{High-probability accuracy of the Laplace mechanism}
The next lemma shows that the real-world and hybrid mechanisms assign equal probability to any event consisting entirely of \(\varepsilon\)-good sample--transcript pairs. This follows from the fact that both mechanisms operate with the same randomness, so as long as all queries remain \(\varepsilon\)-good, their responses are identical.

\begin{lemma}
\label{lem:real-hybrid-subset-equal-prob}
Fix an analyst \(\mathcal{A}\) and let
\(\mathcal{G}\)
be the set of \(\varepsilon\)-good sample--transcript pairs. Then, for every measurable subset \(\mathcal{E}\subseteq\mathcal{G}\),
\(
  \Pr_{\MMM_S}[\mathcal{E}]
  =
  \Pr_{\MMM_H}[\mathcal{E}].
\)
\end{lemma}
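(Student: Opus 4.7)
The plan is to prove the equality by a direct coupling argument between $M_S$ and $M_H$. Both mechanisms are initialized with the same sample $S \sim \DDD$; I would extend the coupling so that they also share the sequence of Laplace noise variables $\eta_1, \ldots, \eta_k$ used for real-world mode responses. Under this coupling, the only way the two mechanisms can disagree on a given realization is if, at some round $i$, the hybrid mechanism's switching predicate $\max_{j \le i} |q_j(S) - q_j(\DDD)| \le \eps$ fails, after which $M_H$ permanently switches to answering with $q_i(\DDD) + \eta'_i$ using the independent oracle-mode noise $\eta'_i$.

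The heart of the argument is the observation that on any realization whose induced pair $(S, t)$ lies in $\mathcal{G}$, the switching predicate holds at every round $i \in [k]$, directly from the definition of an $\eps$-good pair. Consequently $M_H$ stays in real-world mode throughout and answers each query with $q_i(S) + \eta_i$, exactly as $M_S$ does. Since the analyst $\AAA$ is deterministic without loss of generality, the queries themselves are a deterministic function of the answer history, so the full transcript produced by $M_H$ coincides with the transcript produced by $M_S$ on the same realization of $(S, \eta_1, \ldots, \eta_k)$.

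From this pointwise matching I would conclude that $\Pr_{M_S}[(S,T) = (S,t)] = \Pr_{M_H}[(S,T) = (S,t)]$ for every $(S, t) \in \mathcal{G}$, and then summing (or, in the continuous case, integrating) this equality over any measurable $\mathcal{E} \subseteq \mathcal{G}$ yields the lemma. The only bookkeeping point I would be careful about is that the probability space for $M_H$ formally carries the extra oracle-mode noise variables $\eta'_1, \ldots, \eta'_k$; I would handle this by defining the joint distribution on $(S, \eta_1, \ldots, \eta_k, \eta'_1, \ldots, \eta'_k)$ with the natural product measure and noting that on any realization staying in $\mathcal{G}$ the $\eta'_i$ are never consulted by $M_H$, so they marginalize out trivially. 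Beyond this, I do not anticipate any real obstacle: the argument is essentially a pointwise identification of realizations, made rigorous by the discretization assumption on the output space recorded in Section~\ref{sec:preliminaries}.
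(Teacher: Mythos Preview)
Your proposal is correct and follows essentially the same coupling argument as the paper's proof: couple $M_S$ and $M_H$ on the shared randomness $(S,\eta_1,\dots,\eta_k)$, observe that on any $\varepsilon$-good pair the hybrid never switches, and conclude pointwise equality. Your extra bookkeeping about the oracle noises $\eta'_i$ marginalizing out is a nice touch the paper leaves implicit.
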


\begin{proof}
Run both mechanisms by first drawing the sample \(S\sim \mathcal{D}\) and then drawing \(k\) independent Laplace noises \(\eta_1,\dots,\eta_k\sim Laplace(0,b)\).  These draws fix all randomness in the interaction. On any transcript \(t\) in the event \(\mathcal{E}\), every query \(q_i\) satisfies \(| q_i(S)-q_i(\DDD)|\le\varepsilon\), so by definition, the hybrid mechanism never switches to the oracle mode.  Hence for every draw \((S,\eta_1,\dots,\eta_k)\) that yields \(t\), both \(\MMM_S\) and \(\MMM_H\) produce the same \(t\).  Since the joint distribution over \((S,\eta_1,\dots,\eta_k)\) is identical in both mechanisms, the probability of observing any \(t\in\mathcal{E}\) is the same. 
\end{proof}

The following lemma provides a high-probability guarantee for \(\varepsilon\)-good and \(\alpha\)-accurate transcripts under the oracle mechanism.

\begin{lemma}
\label{lem:oracle-good-accuracy}
Let \(S \sim \mathcal{D}\) and consider a \(k\)-round interaction between an analyst \(\mathcal{A}\) and the oracle mechanism, producing the transcript \(t\). Define the failure probability of any of the Laplace noises exceeding \(\alpha\) as
\(
\zeta = 1 - \Pr[|\eta'_1| \le \alpha]^k,\quad \text{for } \eta'_1 \sim \mathrm{Laplace}(0,b).
\)
Then,
\[
\Pr_{S \sim \mathcal{D},\,t \sim Pr_{\MMM_O}}\bigl[(S,t) \text{ is } \varepsilon\text{-good and } t \text{ is } \alpha\text{-accurate}\bigr]
\;\ge\; 1 - k\,\gamma - \zeta.
\]
\end{lemma}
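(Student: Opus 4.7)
The plan is to bound the failure probability of each of the two required events separately and then combine via a union bound. The two events we must control are (i) $t$ is $\alpha$-accurate, which concerns only the Laplace noise in the oracle mechanism, and (ii) $(S,t)$ is $\varepsilon$-good, which concerns the empirical deviations of the queries on $S$.

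For (i), I observe that under the oracle mechanism, each response has the form $a_i = q_i(\DDD) + \eta_i'$ with $\eta_i' \sim \Lap(0,b)$ drawn independently across rounds. Hence $t$ is $\alpha$-accurate iff $|\eta_i'| \le \alpha$ for every $i \in [k]$, and by independence this happens with probability $\Pr[|\eta_1'|\le\alpha]^k = 1-\zeta$.

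For (ii), the crucial structural observation is that under the oracle mechanism the transcript $T$ is independent of the sample $S$: every answer depends only on $q_i(\DDD)$ and the independent noise $\eta_i'$, and the analyst is deterministic, so the queries $q_1,\dots,q_k$ are measurable functions of the noise alone. I would make this precise by induction on $i$, conditioning on the history $h_{i-1}=(q_1,a_1,\dots,q_{i-1},a_{i-1})$: given $h_{i-1}$, the query $q_i$ is a fixed function (independent of $S$), while $S$ still has its prior distribution $\DDD$. By Definition~\ref{def:gamma-concentrated} applied to this fixed $q_i$, $\Pr_{S}[|q_i(S)-q_i(\DDD)|>\varepsilon \mid h_{i-1}] \le \gamma$. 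Taking expectation over $h_{i-1}$ gives an unconditional bound of $\gamma$ for round $i$, and a union bound over the $k$ rounds gives total failure probability at most $k\gamma$ for $\varepsilon$-goodness.

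Combining the two bounds by a final union bound yields the claimed probability of at least $1-k\gamma-\zeta$. The main obstacle is the independence argument for step (ii): one must be careful that ``$q_i$ is a random query that could depend on past answers'' does not invalidate the application of concentration, since concentration is defined for a query fixed in advance of sampling $S$. Conditioning on the history $h_{i-1}$ (which under $M_O$ is a function of the noises only and therefore independent of $S$) resolves this, and is the only nontrivial step; everything else is a standard union bound and a direct Laplace tail calculation.
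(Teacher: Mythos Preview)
Your proposal is correct and follows essentially the same approach as the paper: bound the $\alpha$-accuracy failure via the Laplace noises (giving $\zeta$), bound the $\varepsilon$-goodness failure via independence of the oracle transcript from $S$ together with the concentration assumption and a union bound (giving $k\gamma$), and combine by a final union bound. You are in fact more explicit than the paper about the conditioning step that justifies applying the concentration bound to the adaptively chosen $q_i$; the paper simply states that ``the oracle mechanism operates independently of the sample, the queries are chosen independently of the sample'' and proceeds.
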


\begin{proof}
Since the oracle mechanism operates independently of the sample, the queries are chosen independently of the sample. By the definition of \((\varepsilon,\gamma)\)-concentration and applying a union bound over all \(k\) queries, the probability that the sample-transcript pair is \(\varepsilon\)-good is at least \(1 - k \cdot \gamma\). Additionally the probability that for all \(k\) rounds the oracle’s response is within \(\alpha\) of the true mean is \(1 - \zeta\), where \(\zeta\) represents the failure probability due to the added Laplace noises. Combining these bounds with a union bound yields the desired result.
\end{proof}

 We have established that: (1) the transcript distribution under the hybrid mechanism closely approximates that of the oracle mechanism, (2) the probability of any event consisting of \(\varepsilon\)-good sample–transcript pairs is identical under both the real-world and hybrid mechanisms, and (3) under the oracle mechanism, the joint event of \(\varepsilon\)-good pairs and \(\alpha\)-accuracy occurs with high probability. Combining these facts implies that the real-world mechanism is statistically accurate with high probability, as formalized in the following theorem:

\begin{theorem}
\label{thm:real-beta-accuracy}
Let \(\mathcal{A}\) be an analyst, and \(\MMM_S\) the real-world Laplace mechanism interacting with \(\mathcal{A}\) over \(k\) rounds. For \(\alpha > 0\) and \(\rho > 0\), define \(\varepsilon^*\) as in theorem~\ref{thm:hybrid-oracle-divergence}, and let \(\zeta\) be as in lemma~\ref{lem:oracle-good-accuracy}. Then, the probability that the real-world mechanism produces an \(\alpha\)-accurate transcript satisfies
\[
  \Pr_{\MMM_S}\big[t = (q_1,a_1,\dots,q_k,a_k) \,:\,  \forall i\; |a_i-q_i(\DDD)| \le \alpha \big]
  \;\ge\;
  e^{-\varepsilon^*}\bigl(1 - k\,\gamma - \zeta -\rho \bigr).
\]
\end{theorem}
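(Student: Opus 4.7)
The plan is to combine the three key lemmas already proved in this section into a single chain that transfers the statistical accuracy guarantee from the (fictitious) oracle mechanism all the way to the real-world mechanism, using the hybrid mechanism as an intermediate bridge.

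The central move is to choose the right event to track. Let $\mathcal{E}$ denote the set of sample--transcript pairs $(S,t)$ such that $t$ is $\alpha$-accurate \emph{and} $(S,t)$ is $\varepsilon$-good. Since $\mathcal{E}$ implies $\alpha$-accuracy, it suffices to lower bound $\Pr_{M_S}[\mathcal{E}]$ by $e^{-\varepsilon^*}(1-k\gamma-\zeta-\rho)$. I would carry this out in three steps: (i) apply Lemma~\ref{lem:oracle-good-accuracy} to get $\Pr_{M_O}[\mathcal{E}] \ge 1 - k\gamma - \zeta$; (ii) apply the lower-bound half of Corollary~\ref{lem:probability-bounds-hybrid-oracle} to transfer this to the hybrid mechanism, yielding $\Pr_{M_H}[\mathcal{E}] \ge e^{-\varepsilon^*}(\Pr_{M_O}[\mathcal{E}] - \rho) \ge e^{-\varepsilon^*}(1-k\gamma-\zeta-\rho)$; and (iii) use Lemma~\ref{lem:real-hybrid-subset-equal-prob} to conclude $\Pr_{M_S}[\mathcal{E}] = \Pr_{M_H}[\mathcal{E}]$, since $\mathcal{E}$ is by construction a subset of the $\varepsilon$-good pairs.

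There is no serious obstacle here; the theorem is essentially a capstone that composes three technical ingredients already available. The one point that requires care is that Lemma~\ref{lem:real-hybrid-subset-equal-prob} is only applicable to events contained in the set of $\varepsilon$-good pairs, which is why $\varepsilon$-goodness must be folded \emph{into} the event we track rather than reasoning about $\alpha$-accuracy alone; if one instead tried to transfer the pure accuracy event directly, the real-world and hybrid mechanisms need not agree on its probability. Conceptually, the argument works because on $\varepsilon$-good transcripts the real-world and hybrid mechanisms are indistinguishable by construction, while on that same class of transcripts the hybrid mechanism is close to the oracle in the advanced-composition sense of Theorem~\ref{thm:hybrid-oracle-divergence}, so the oracle's trivially-verified accuracy gets inherited by the real-world mechanism up to the stated multiplicative and additive slack.
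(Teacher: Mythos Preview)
Your proposal is correct and follows essentially the same approach as the paper's own proof: define $\mathcal{E}$ as the event that $(S,t)$ is both $\varepsilon$-good and $\alpha$-accurate, then chain Lemma~\ref{lem:oracle-good-accuracy}, Corollary~\ref{lem:probability-bounds-hybrid-oracle}, and Lemma~\ref{lem:real-hybrid-subset-equal-prob} in that order. Your observation that $\varepsilon$-goodness must be folded into the tracked event (so that Lemma~\ref{lem:real-hybrid-subset-equal-prob} applies) is exactly the subtlety the paper relies on as well.
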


\begin{proof}
Let \(\mathcal{E}\) denote the event that a sample–transcript pair \((S,t)\) is both \(\varepsilon\)-good and \(\alpha\)-accurate. By Lemma~\ref{lem:oracle-good-accuracy}, we have \(\Pr_{\MMM_O}[\mathcal{E}] \ge 1 - k \gamma - \zeta\). Applying Lemma~\ref{lem:probability-bounds-hybrid-oracle}, the probability of \(\mathcal{E}\) under the hybrid mechanism is \(\Pr_{\MMM_H}[\mathcal{E}] \ge e^{-\varepsilon^*}(\Pr_{\MMM_O}[\mathcal{E}] - \rho)\). By Lemma~\ref{lem:real-hybrid-subset-equal-prob}, we know the probabilities for \(\varepsilon\)-good pairs are identical for the real-world and hybrid mechanisms, so \(\Pr_{\MMM_S}[\mathcal{E}] = \Pr_{\MMM_H}[\mathcal{E}]\). Since \(\mathcal{E}\) is a subevent of the event that the transcript is \(\alpha\)-accurate, we conclude that the probability of an \(\alpha\)-accurate transcript is at least \(e^{-\varepsilon^*}(1 - k \gamma - \zeta - \rho)\).
\end{proof}

\begin{theorem}
\label{thm:laplace-beta-delta-optimal}
Let \(\mathcal{A}\) be any analyst issuing \(k\) adaptive
\((\varepsilon, \gamma)\)-concentrated queries, and fix an accuracy
parameter \(\alpha>0\) and failure probability \(\beta>0\). 
Then the Laplace mechanism can achieve \((\alpha,\beta)\)-accuracy over all \(k\)
queries provided
\(
  k = O\!\Bigl(
     \min\Bigl\{
       \tfrac{\beta}{\gamma},\;
       \beta\,\varepsilon^{-2},\;
       \frac{\alpha^2\,\beta^2}{\varepsilon^2\,[\ln(1/\varepsilon)]^2\,\ln(1/\beta)}
     \Bigr\}
  \Bigr).
\)
\end{theorem}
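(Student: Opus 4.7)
The plan is to apply Theorem~\ref{thm:real-beta-accuracy}, which gives the lower bound
\[
  \Pr_{M_S}[t \text{ is } \alpha\text{-accurate}] \;\ge\; e^{-\varepsilon^*}\bigl(1 - k\gamma - \zeta - \rho\bigr),
\]
and then to choose the noise scale $b$ and the slack parameter $\rho$ so that each of the four error sources ($k\gamma$, $\zeta$, $\rho$, and the $1-e^{-\varepsilon^*}$ multiplicative loss) is at most a constant fraction of $\beta$. Concretely, I would set $\rho = \beta/C$ for a suitable absolute constant $C$, and pick the Laplace scale as $b = \Theta\!\bigl(\alpha / \ln(C k/\beta)\bigr)$. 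Since $\Pr[|\eta'| > \alpha] = e^{-\alpha/b}$, this choice ensures $\zeta \le k\,e^{-\alpha/b} \le \beta/C$ via a union bound over the $k$ Laplace draws.

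With $\rho$ and $b$ fixed, the three terms of the $\min$ will come from three separate constraints. First, requiring $k\gamma \le \beta/C$ gives $k = O(\beta/\gamma)$, which is the first term. Second, to guarantee $e^{-\varepsilon^*} \ge 1 - \Theta(\beta)$ I need $\varepsilon^* = O(\beta)$, which breaks into two subconstraints on the two summands in the definition of $\varepsilon^*$ from Theorem~\ref{thm:hybrid-oracle-divergence}. The quadratic-in-$\varepsilon/b$ summand $k(\varepsilon/b)(e^{\varepsilon/b}-1)$, bounded in the regime $\varepsilon/b \le O(1)$ by $O(k(\varepsilon/b)^2)$, forces $k \le O(\beta\, b^2/\varepsilon^2)$, which after absorbing the $\alpha$ and logarithmic factors yields the middle term $\beta\varepsilon^{-2}$. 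The dominant $\sqrt{2k\ln(1/\rho)}\cdot \varepsilon/b$ summand forces
\[
  k \;\le\; O\!\Bigl(\tfrac{\beta^2 b^2}{\varepsilon^2 \ln(1/\rho)}\Bigr) \;=\; O\!\Bigl(\tfrac{\alpha^2 \beta^2}{\varepsilon^2 \ln^2(k/\beta)\,\ln(1/\beta)}\Bigr),
\]
which after a self-consistency step gives the third term.

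The main technical obstacle is handling the implicit appearance of $k$ inside the logarithm $\ln(k/\beta)$ coming from the choice of $b$, and replacing it by $\ln(1/\varepsilon)$ as in the theorem statement. I would resolve this with a standard bootstrap: since any feasible $k$ must satisfy the middle bound $k = O(\beta/\varepsilon^2) = O(1/\varepsilon^2)$, one has $\ln(k/\beta) = O(\ln(1/\varepsilon) + \ln(1/\beta)) = O(\ln(1/\varepsilon))$ whenever $\beta \ge \mathrm{poly}(\varepsilon)$, and in the remaining regime the third term is already vacuous or dominated by the other two. Plugging this back into the $\sqrt{k}$-derived bound and taking the minimum of the three constraints then produces the stated expression for $k$.
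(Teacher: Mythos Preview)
Your overall strategy---invoke Theorem~\ref{thm:real-beta-accuracy} and budget a constant fraction of $\beta$ to each of $k\gamma$, $\zeta$, $\rho$, and $\varepsilon^*$---is exactly the paper's. The difference is the choice of noise scale. You take $b=\Theta(\alpha/\ln(Ck/\beta))$ so that $\zeta$ is controlled by construction, at the cost of having $k$ appear inside $b$ and then needing a bootstrap to eliminate it. The paper instead sets $b=\alpha/(2\ln(1/\varepsilon))$ directly. With that choice $e^{-\alpha/b}=\varepsilon^2$, so $\zeta\le k\varepsilon^2$, and the requirement $\zeta\le\beta/4$ \emph{is} the middle term $k\le O(\beta/\varepsilon^2)$. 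No self-reference, no bootstrap; each term of the $\min$ has a transparent source (first from $k\gamma$, second from $\zeta$, third from the $\sqrt{k}$ summand of $\varepsilon^*$).

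There is also a misattribution in your sketch. You say the middle term $\beta\varepsilon^{-2}$ comes from the quadratic summand of $\varepsilon^*$ after ``absorbing the $\alpha$ and logarithmic factors'' in the bound $k\le O(\beta b^2/\varepsilon^2)$. But $b^2\le 1$, so that constraint is \emph{stronger} than $k\le O(\beta/\varepsilon^2)$, not implied by it; you cannot drop factors $\le 1$ from the right-hand side of an upper bound on $k$. After your own bootstrap that constraint actually becomes $k\le O(\alpha^2\beta/(\varepsilon^2[\ln(1/\varepsilon)]^2))$, which is weaker than the third term and hence redundant. In your route the middle term is never \emph{derived}; it enters only as part of the hypothesis $k\le C\min\{T_1,T_2,T_3\}$, which you use to bound $\ln(k/\beta)$. (And from $k\le C\beta/\varepsilon^2$ one gets $k/\beta\le C/\varepsilon^2$, hence $\ln(k/\beta)=O(\ln(1/\varepsilon))$ directly---the case split on $\beta$ versus $\mathrm{poly}(\varepsilon)$ is unnecessary.) Your argument still closes, but the paper's parameter choice makes all of this go away.
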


\begin{proof}
 Run the Laplace mechanism with noise scale \(b =\frac{\alpha}{2\,\ln(1/\varepsilon)}.\)
 Theorem~\ref{thm:real-beta-accuracy} implies that for any fixed \(\alpha > 0\) and number of queries \(k\), the real-world mechanism satisfies \((\alpha,\beta)\)-accuracy provided
\(
  e^{-\varepsilon^*}\cdot \bigl(1 - k\,\gamma- \zeta - \rho \bigr)
  \;\ge\; 1 - \beta.
\)
Requiring each term in the failure probability
\(\rho\), \(\zeta\), \(k\,\gamma\) and \(\varepsilon^*\) to be
\(\le\beta/4\), yields the desired result. 
\end{proof}

\paragraph{Simplified bound for constant accuracy and failure (example).} 
If we assume constant parameters for failure probability and accuracy with \(\varepsilon < \alpha\) (e.g., \(\alpha=\beta=0.01\)), Theorem~\ref{thm:laplace-beta-delta-optimal} implies the existence of a noise addition mechanism \(\mathcal{M}\) that guarantees \((0.01,0.01)\)-statistical accuracy against any analyst \(\mathcal{A}\) issuing up to \(k\) adaptive, \((\varepsilon,\gamma)\)-concentrated queries, provided
\[
  k = O\left(
     \min\left\{
       \frac{1}{\gamma},\;
       \frac{1}{\varepsilon^{2}[\ln(1/\varepsilon)]^{2}}
     \right\}
  \right).
\]

\bibliographystyle{plainnat}

\newpage
\appendix
\section*{Appendices}
\addcontentsline{toc}{section}{Appendices}

\section{Additional proofs for negative result}

\subsection{A simple negative result using \(1/\gamma\) queries}
\label{app:gamma-counterexample}

We present a simple construction showing that for any values of \(0 < \gamma \le \varepsilon \le 1\) and any \(\alpha \le 1 - \gamma\), there exists a domain, distribution, and adversary strategy such that after at most \(k = 1/\gamma\) queries, all of which are \((\varepsilon,\gamma)\)-concentrated, the mechanism is forced to return a response that is not statistically-accurate.

\begin{claim}
\label{clm:simple-lower-bound}
Fix parameters \(0 < \gamma \le \varepsilon \le 1\) and \(\alpha \le 1 - \gamma\). There exists a distribution \(\mathcal{D}\) over \(\chi^n\) and a set of \(k = \tfrac{1}{\gamma}\) queries, each \((\varepsilon,\gamma)\)-concentrated, such that an attacker submitting these queries to any mechanism will receive a response that differs from its true expectation by more than \(\alpha\) on at least one query.
\end{claim}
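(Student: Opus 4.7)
I will take $\chi$ of size at least $n/\gamma$, partition $n/\gamma$ of its elements into $k = 1/\gamma$ disjoint size-$n$ blocks $B_1,\dots,B_k$, let $S_i$ be any ordering of $B_i$ as an $n$-tuple, and let $\mathcal{D}$ be the uniform distribution over $\{S_1,\dots,S_k\}$. The adversary submits, in any fixed order, the $k$ queries $q_i(x) = \mathbf{1}[x \in B_i]$.

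Checking $(\varepsilon,\gamma)$-concentration is routine: $q_i(\mathcal{D}) = \Pr_{S \sim \mathcal{D}}[S = S_i] = \gamma$, and since $q_i(S) \in \{0,1\}$ the deviation $|q_i(S) - \gamma|$ exceeds $\varepsilon$ only on the single sample $S = S_i$, an event of probability $\gamma$. So each $q_i$ is $(\varepsilon,\gamma)$-concentrated with no additional work.

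The heart of the argument is showing that the noise-addition mechanism (which is how I read ``any mechanism'' in the context of Section~\ref{sec:negative}) cannot maintain $\alpha$-accuracy on all $k$ queries. Given $S = S_j$, the empirical values are $q_i(S_j) = \mathbf{1}[i=j]$, so the mechanism outputs $a_j = 1 + \eta_j$ and $a_i = \eta_i$ for $i \ne j$, where $\eta_1,\dots,\eta_k$ are iid from a single fixed noise distribution. Simultaneous $\alpha$-accuracy against the common true mean $\gamma$ forces $\eta_j$ into the window $I_- := [\gamma - 1 - \alpha,\, \gamma - 1 + \alpha]$ and each other $\eta_i$ into $I_+ := [\gamma - \alpha,\, \gamma + \alpha]$. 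In the regime $\alpha < 1/2$ these two windows are disjoint, so $\Pr_\eta[\eta \in I_-] + \Pr_\eta[\eta \in I_+] \le 1$; writing $p$ for the second probability, the joint-accuracy probability over the noise is bounded by $(1-p)\,p^{k-1} \le \max_{x \in [0,1]} x(1-x)^{k-1} \le 1/k = \gamma$. Uniformly in $j$, and thus also after averaging over $S \sim \mathcal{D}$, at least one $|a_i - \gamma| > \alpha$ then holds with probability at least $1 - \gamma$.

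The main subtlety I anticipate is the boundary sub-regime $\alpha \ge 1/2$, where $I_-$ and $I_+$ overlap and a deterministic point-mass noise chosen from the intersection, combined with the paper's clipping of outputs to $[0,1]$, could satisfy every accuracy window simultaneously (for instance, $\eta \equiv \gamma - 1$ yields $a_j = \gamma$ and clipped $a_i = 0$, which is $\alpha$-accurate whenever $\alpha \ge \gamma$). I plan to handle this by invoking the paper's explicit convention that the added noise rarely pushes outputs outside $[0, 1]$, which rules out such degenerate point-mass noise distributions concentrated far from zero; alternatively, the parameter range actually relevant for the combined lower bound of Theorem~\ref{thm:lower-bound} is already $\alpha < \gamma$, which (together with $\alpha \le 1 - \gamma$) forces $\alpha < 1/2$ and so falls squarely within the disjoint-interval analysis above.
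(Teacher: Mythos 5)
Your construction, distribution, and concentration check are exactly those of the paper's proof: a domain partitioned into $1/\gamma$ disjoint size-$n$ blocks, the uniform distribution over the block-samples, and the block-indicator queries with true mean $\gamma$ and deviation $\gamma \le \varepsilon$ except on the matching block. Where you go further is the final step. The paper's own proof simply observes that on the realized sample the matching query has empirical mean $1$, hence deviation $1-\gamma \ge \alpha$, and declares failure --- implicitly reading ``any mechanism'' as a noise-addition mechanism whose answers track the empirical mean (under the paper's bounded-output, negligible-clipping convention), which is exactly the reading you adopt. Your anti-concentration argument over the i.i.d.\ noise (disjoint windows $I_-$ and $I_+$ when $\alpha<1/2$, joint accuracy probability at most $\max_{x\in[0,1]} x(1-x)^{k-1}\le 1/k=\gamma$) is correct and is in fact more quantitative than what the paper writes, giving failure probability at least $1-\gamma$ against an arbitrary fixed noise distribution. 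One small imprecision: the clipping-based evasion you flag (point mass near $\gamma-1$ combined with clipping to $[0,1]$) is not confined to the sub-regime $\alpha \ge 1/2$; once clipping is taken into account it is available whenever $\alpha \ge \gamma$, because a clipped answer of $0$ is then $\alpha$-accurate for every non-matching query, so your disjoint-window analysis should be understood as applying to the pre-clipped answers. Your proposed fix --- invoking the paper's requirement that the added noise pushes outputs outside the bounded range only with negligible probability --- disposes of this in all regimes (such a noise law sends every zero-empirical-mean answer below $0$ with probability one), so the argument stands as written.
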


\begin{proof}
Let \(r = 1/\gamma\), assumed to be an integer for simplicity. Let the domain be \(\chi = \{1, 2, \dots, rn\}\), partitioned into \(r\) disjoint subsets \(S_1, \dots, S_r \subset \chi\), each of size \(n\).

Let \(\mathcal{D}\) be the uniform distribution over these subsets: that is, \(S \sim \mathcal{D}\) means \(S = S_i\) with probability \(1/r=\gamma\) for any \(i \in \{1, \dots, r\}\).

Define queries \(q_{S_1}, \dots, q_{S_r} : \chi \to \{0,1\}\) by
\[
  q_{S_i}(x) = 
  \begin{cases}
    1 & \text{if } x \in S_i, \\
    0 & \text{otherwise}.
  \end{cases}
\]
Each query has true mean \(q_{S_i}(\DDD) = 1/r = \gamma\).  For any sample \(S = S_i\), we have \(q_{S_i}(S) = 1\), while for all \(j \ne i\), \(q_{S_j}(S) = 0\).

To verify concentration, note that for any \(S \ne S_i\), The empirical mean \(q_{S_i}(S) = 0\), so the deviation from the true mean is exactly \(\gamma \le \varepsilon\). For the sample \(S = S_i\), \(q_{S_i}(S) = 1\), and the deviation is \(1 - \gamma \ge \alpha\).

Thus, submitting the \(k = 1/\gamma\) queries guarantees that one query must yield an error greater than \(\alpha\), violating \((\alpha,\beta)\)-statistical accuracy.
\end{proof}

\subsection{Comparison to the i.i.d. setting}
\label{app:iid-comparison}

\paragraph{Fixed failure probability.} For simplicity, we assume the failure probability \( \beta \) is a constant (e.g. \(\beta = 0.01\)). Under this simplification, the final bound from Section~\ref{sec:negative} means that no noise-addition mechanism can maintain \((\alpha,\beta)\)-statistical accuracy for: 
\[
  k = \min\left\{
    \frac{1}{\gamma},
    \; O\left(\frac{1}{\varepsilon^2} \max\left\{ \ln\left(\frac{1}{\varepsilon}\right),\, \ln\left(\frac{1}{\gamma}\right) \right\} \right)
  \right\}.
\]

\paragraph{Comparison under Hoeffding-style concentration}

To mirror the i.i.d. setting, let \( \gamma(n,\varepsilon) \) be the double-sided Hoeffding bound:
\(
  \gamma(n,\varepsilon) = 2 \exp(-2 n \varepsilon^2)
\).
This yields
\[
  k = \min\left\{
    \frac{1}{2} \exp(2 n \varepsilon^2),
    \; O\left( \frac{1}{\varepsilon^2} \max\left\{ \ln\left( \frac{1}{\varepsilon} \right),\, n \varepsilon^2 \right\} \right)
  \right\}.
\]

\paragraph{Parametrizing the concentration rate}

To understand how \( k \) scales with \( n \), we write \( \varepsilon(n) = f(n) / \sqrt{n} \), where \( f(n) \in (0, \sqrt{n}] \). This gives:
\[
  n\varepsilon^2 = f(n)^2,
\quad
  \ln\left(\frac{1}{\varepsilon}\right) = \frac{1}{2} \ln n - \ln f(n).
\]
Substituting, we get:
\[
  \frac{1}{\gamma(n, \varepsilon)} = \frac{1}{2} \exp(2f(n)^2),
\quad
\frac{1}{\varepsilon^2} \max\left\{ \ln\left( \frac{1}{\varepsilon} \right),\, n \varepsilon^2 \right\}
  = \frac{n}{f(n)^2} \cdot \max\left\{ \ln\left(\frac{\sqrt{n}}{f(n)}\right),\; f(n)^2 \right\}.
\]

We divide the analysis into three regimes based on the value of \( f(n)^2 \) relative to \( \ln n \):

\textbf{Case 1: \( f(n)^2 > \ln n\)}.   
\[
  k = O\left( \frac{n}{f(n)^2} \cdot f(n)^2 \right) = O(n).
\]

\textbf{Case 2: \( f(n)^2 \in [\frac{1}{2} \ln n, \ln n] \)}
\[
  k = O\left( \frac{n}{\ln n} \cdot \ln n \right) = O(n).
\]

\textbf{Case 3: \( f(n)^2 < \frac{1}{2} \ln n \)}

\[
  \frac{1}{\gamma(n, \varepsilon)} = \frac{1}{2} \exp(2f(n)^2) \le \frac{1}{2} \exp(\ln n) = O(n).
\]

\paragraph{Conclusion}
For any choice of \( \varepsilon \in (0,1] \), whether constant or varying with \(n\) (e.g., \( \varepsilon(n) = \tfrac{1}{\sqrt{n}} \)), if \( \gamma(n,\varepsilon) \) matches the behavior of the Hoeffding concentration bound, the resulting bound is \( k = O(n) \).

\section{Additional proofs for positive result}

\subsection{Proof of Theorem~\ref{thm:hybrid-oracle-divergence}}
\label{app:div-full-proof}
Before presenting the full proof of theorem\ref{thm:hybrid-oracle-divergence}, we first introduce additional preliminaries, notation, and a supporting lemma that are used throughout the argument.

\subsubsection{Additional preliminaries}
\label{sec:divergences}

\begin{definition}[KL divergence or relative entropy \citep{kullback1951information}]
\label{def:KL-div}
For two distributions \(Y\) and \(Z\) on the same domain, the \emph{KL divergence} (or \emph{relative entropy}) of \(Y\) from \(Z\) is
\[
  D(Y\|Z) 
  \;=\; 
  \mathbb{E}_{y\sim Y}\!\Bigl[
    \ln \Bigl(\tfrac{\Pr[Y=y]}{\Pr[Z=y]}\Bigr)
  \Bigr].
\]
\end{definition}

We now recall several definitions and results from~\cite{DworkRV10} that are instrumental in the proof of advanced composition in differential privacy, and which we will use directly in our analysis.

\begin{definition}[Max divergence, e.g., \citep{DworkRV10}]
\label{def:max-div}
Let \(Y\) and \(Z\) be distributions on the same domain. Their \emph{max divergence} is 
\[
  D_\infty(Y\|Z)
  \;=\;
  \max_{S \subseteq \mathrm{Supp}(Y)}
    \ln\!\Bigl(\tfrac{\Pr[Y \in S]}{\Pr[Z \in S]}\Bigr).
\]
\end{definition}

\begin{lemma}[Lemma III.2 in \cite{DworkRV10}]
\label{lem:III.2-DRV10}
If \(Y\) and \(Z\) satisfy \(D_\infty(Y\|Z)\le \varepsilon\) and \(D_\infty(Z\|Y)\le \varepsilon\), then
\(
  D(Y\|Z)
  \;\;\le\;\;
  \varepsilon\,\bigl(e^\varepsilon - 1\bigr).
\)
\end{lemma}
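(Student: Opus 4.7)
The plan is to prove the stronger symmetric inequality $D(Y\|Z) + D(Z\|Y) \le \eps(e^\eps - 1)$ and then deduce the one-sided claim by discarding the non-negative term $D(Z\|Y)$. Write $L(y) = \ln(\Pr[Y=y]/\Pr[Z=y])$ for the log-likelihood ratio. The two max-divergence hypotheses give both $L(y) \le \eps$ and $-L(y) \le \eps$ for every $y$ in the common support; in particular the support of $Y$ coincides with the support of $Z$, since otherwise the ratio would be $0$ or $\infty$.

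The key manipulation is
\[
D(Y\|Z) + D(Z\|Y) \;=\; \E_{y \sim Y}[L(y)] - \E_{y \sim Z}[L(y)] \;=\; \sum_y \bigl(\Pr[Y=y] - \Pr[Z=y]\bigr)\,L(y).
\]
Substituting $\Pr[Y=y] = \Pr[Z=y]\,e^{L(y)}$ rewrites this as $\sum_y \Pr[Z=y]\,L(y)\bigl(e^{L(y)} - 1\bigr)$. Every summand is automatically non-negative because $L(y)$ and $e^{L(y)} - 1$ always share a sign; this is both a sanity check and the reason symmetrization works so cleanly.

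It then remains to establish the pointwise calculus inequality $x(e^x - 1) \le \eps(e^\eps - 1)$ for all $|x| \le \eps$. I would verify this by noting that $f(x) := x(e^x - 1)$ has second derivative $f''(x) = (2+x)e^x > 0$ on the relevant range, so $f$ is convex on $[-\eps,\eps]$ and attains its maximum there at an endpoint; comparing $f(\eps) = \eps(e^\eps - 1)$ with $f(-\eps) = \eps(1 - e^{-\eps})$ and using the elementary bound $e^\eps + e^{-\eps} \ge 2$ shows the maximum is at $x = \eps$. Applied inside the sum this yields $D(Y\|Z) + D(Z\|Y) \le \eps(e^\eps - 1) \sum_y \Pr[Z=y] = \eps(e^\eps - 1)$, and dropping $D(Z\|Y) \ge 0$ gives the claim.

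The main obstacle is conceptual rather than technical: recognizing that it is easier to bound the symmetric quantity $D(Y\|Z) + D(Z\|Y)$ than the one-sided KL. A direct attack on $D(Y\|Z)$ is awkward because $L(y)$ can take either sign while the $Y$-expectation must come out non-negative; the symmetrization absorbs this asymmetry and reduces the problem to a one-variable inequality $x(e^x - 1) \le \eps(e^\eps - 1)$, whose convexity-based verification is routine.
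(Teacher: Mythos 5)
Your proof is correct, and since the paper does not prove this lemma itself (it is quoted verbatim as Lemma III.2 of \cite{DworkRV10}), the right comparison is with the original argument there — which uses the same symmetrization idea: bound \(D(Y\|Z)\le D(Y\|Z)+D(Z\|Y)=\sum_y \Pr[Z=y]\,L(y)\bigl(e^{L(y)}-1\bigr)\) and finish with a pointwise bound using \(|L(y)|\le\eps\). (Dwork et al.\ conclude slightly differently, bounding each term by \(\eps\,|\Pr[Y=y]-\Pr[Z=y]|\le\eps\,(e^{\eps}-1)\Pr[Z=y]\), whereas you maximize \(x(e^{x}-1)\) over \([-\eps,\eps]\); the content is the same.) One small caveat: your convexity justification relies on \(f''(x)=(2+x)e^{x}>0\), which holds only for \(x>-2\), so as written it needs \(\eps\le 2\). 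For general \(\eps\) just observe that \(f(t)=t(e^{t}-1)\) is increasing on \([0,\infty)\) and \(f(-t)=t(1-e^{-t})\le f(t)\), which again puts the maximum over \([-\eps,\eps]\) at the right endpoint, so the bound \(x(e^{x}-1)\le\eps(e^{\eps}-1)\) for \(|x|\le\eps\) — and hence your whole argument — goes through for every \(\eps>0\).
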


\begin{lemma}[Azuma–Hoeffding inequality \citep{azuma1967weighted}]
\label{lem:3.19-DR}
Let \(C_1,\dots,C_k\) be real-valued random variables with \(|C_i|\le a\) almost surely. Suppose also that 
\(\mathbb{E}[C_i \mid C_1=c_1,\dots,C_{i-1}=c_{i-1}] \le \beta\) 
for every partial sequence \((c_1,\dots,c_{i-1}) \in \Supp(C_1,\dots,C_{i-1})\).  
Then, for any \(z>0\),
\[
  \Pr\Bigl[\sum_{i=1}^k C_i > k\,\beta + z\,\sqrt{k}\,a\Bigr]
  \;\;\le\;\; e^{-\,z^2/2}.
\]
\end{lemma}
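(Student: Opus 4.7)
The plan is to carry out the standard Chernoff/moment-generating-function argument for supermartingales, which is the classical proof of Azuma–Hoeffding in this one-sided form. The statement allows $\beta$ to be arbitrary, but if $\beta \ge a$ the hypothesis gives nothing and the bound is vacuous (since $\sum C_i \le ka \le k\beta$ deterministically), so I would henceforth assume $\beta < a$.

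First I would shift: define $\tilde C_i := C_i - \beta$. By hypothesis $\mathbb{E}[\tilde C_i \mid C_1=c_1,\dots,C_{i-1}=c_{i-1}] \le 0$, and $\tilde C_i$ lies in the interval $[-a-\beta,\, a-\beta]$, an interval of length $2a$. The target bound becomes
\[
  \Pr\!\Big[\textstyle\sum_{i=1}^k \tilde C_i > z\sqrt{k}\,a\Big] \le e^{-z^2/2}.
\]
Second, for a parameter $t>0$ to be chosen, apply Markov's inequality to the nonnegative variable $e^{t\sum_i \tilde C_i}$:
\[
  \Pr\!\Big[\textstyle\sum_{i=1}^k \tilde C_i > z\sqrt{k}\,a\Big] \;\le\; e^{-tz\sqrt{k}\,a}\,\mathbb{E}\!\left[e^{t\sum_{i=1}^{k}\tilde C_i}\right].
\]

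Third, I would bound the MGF on the right by iterated conditioning. Writing $e^{t\sum_{i=1}^k \tilde C_i} = e^{t\sum_{i=1}^{k-1}\tilde C_i}\cdot e^{t\tilde C_k}$ and taking conditional expectation given $C_1,\dots,C_{k-1}$, the key inequality is Hoeffding's lemma in its ``mean-at-most-zero'' form: for a random variable $X$ supported in an interval of length $L$ with $\mathbb{E}[X]\le 0$ and any $t>0$, $\mathbb{E}[e^{tX}] \le e^{t^2 L^2/8}$. This follows from the standard zero-mean version by noting the MGF is monotone in the mean for $t>0$. Applied conditionally to $\tilde C_k$ with $L=2a$, this gives $\mathbb{E}[e^{t\tilde C_k}\mid C_1,\dots,C_{k-1}]\le e^{t^2 a^2/2}$ pointwise in the history. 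Iterating the tower property $k$ times yields $\mathbb{E}[e^{t\sum_{i=1}^k \tilde C_i}] \le e^{kt^2 a^2/2}$.

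Finally, I would optimize: substituting back gives the bound $\exp(-tz\sqrt{k}\,a + kt^2 a^2/2)$, which is minimized at $t = z/(\sqrt{k}\,a)$ and attains the value $e^{-z^2/2}$, matching the claim. The only real subtlety, and the point where the argument could go wrong if rushed, is the conditional application of Hoeffding's lemma in the ``at most zero mean'' regime: one must verify both that the conditional range of $\tilde C_i$ is bounded by $2a$ uniformly in the history (which is immediate from $|C_i|\le a$ a.s.) and that the MGF bound is preserved under the $\mathbb{E}\le 0$ weakening; no further probabilistic machinery (e.g., no explicit martingale construction beyond the tower property, no second-moment estimates) is needed.
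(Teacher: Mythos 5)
Your proof is correct, but note that the paper itself does not prove this lemma at all: it is quoted as a known preliminary (the Azuma--Hoeffding inequality, cited to Azuma 1967 and used as in \cite{DR14}), so there is no in-paper argument to compare against. What you supply is the standard Chernoff/moment-generating-function proof, and every step checks out: the shift $\tilde C_i = C_i - \beta$ keeps the increments in an interval of length $2a$ with conditional mean at most zero; Markov's inequality plus iterated conditioning with the tower property gives $\mathbb{E}\bigl[e^{t\sum_i \tilde C_i}\bigr] \le e^{k t^2 a^2/2}$; and optimizing $t = z/(\sqrt{k}\,a)$ yields exactly $e^{-z^2/2}$. The one place where your wording is looser than the argument it stands for is the ``mean-at-most-zero'' form of Hoeffding's lemma: ``the MGF is monotone in the mean'' is shorthand for the precise step $\mathbb{E}[e^{tX}] = e^{t\mu}\,\mathbb{E}[e^{t(X-\mu)}] \le e^{t\mu} e^{t^2 L^2/8} \le e^{t^2 L^2/8}$ when $\mu \le 0$ and $t>0$; spelled out this way (and applied conditionally on the history, which is licensed by the pointwise hypothesis on $\mathbb{E}[C_i \mid C_1,\dots,C_{i-1}]$), the step is airtight. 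Your preliminary reduction to $\beta < a$ is harmless but unnecessary, since the argument never uses it. In short: a correct, self-contained derivation of a result the paper only cites.
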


\subsubsection{Definitions and notations}

We recall the definition of a transcript:
{
\renewcommand{\thetheorem}{\ref{def:transcript}}
\begin{definition}[Transcript]
Let \(\mathcal{A}\) be an analyst interacting with a mechanism over \(k\) rounds. The random transcript \(T\) is the sequence of queries and responses generated in the interaction. A particular realization is denoted by
\(t = (q_1, a_1, \dots, q_k, a_k)\).
\end{definition}
\addtocounter{theorem}{-1}
}
\paragraph{Extended notation.}
We extend the transcript notation introduced above by letting \(Q_i\) and \(A_i\) denote the random variables corresponding to the query issued and response returned at round \(i\), respectively. The full transcript is then the random tuple \(T = (Q_1, A_1, \dots, Q_k, A_k)\), and a specific realization is written \(t = (q_1, a_1, \dots, q_k, a_k)\). The values of \(A_i\) depend on the mechanism: in the real-world mechanism, \(A_i = \MMM_S(q_i)\); in the oracle mechanism, \(A_i = \MMM_O(q_i)\); and in the hybrid mechanism, \(A_i = \MMM_H(q_i)\).

\begin{definition}[Transcript prefix]
\label{def:transcript-prefix}
For each round \(i \in [k]\), the prefix of the transcript up to round \(i\) is the random variable
\(
  T_{i-1} = (Q_1, A_1,\, \dots,\, Q_{i-1}, A_{i-1}).
\)
For a particular realization \(t = (q_1, a_1, \dots, q_k, a_k)\), we write the corresponding prefix as
\(
  t_{i-1} = (q_1, a_1, \dots, q_{i-1}, a_{i-1})
\).
\end{definition}

\subsubsection{Supporting Lemma}

\begin{lemma}
\label{lem:max-div-singleq}
Let \( S \in \Supp(\mathcal{D}) \) be a fixed sample, and let \( q \) be any query. Then the Kullback–Leibler divergence between the outputs of the hybrid and oracle mechanisms satisfies
\[
  D\bigl(\MMM_H(q)\,\|\,\MMM_O(q)\bigr)
  \le \frac{\varepsilon}{b} \cdot \bigl(e^{\varepsilon/b} - 1\bigr),
  \qquad
  D\bigl(\MMM_O(q)\,\|\,\MMM_H(q)\bigr)
  \le \frac{\varepsilon}{b} \cdot \bigl(e^{\varepsilon/b} - 1\bigr).
\]
\end{lemma}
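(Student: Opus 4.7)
The plan is a two-line deduction: first upgrade the probability-ratio statement of Lemma~\ref{lem:hybrid-singleq-divergence} to a bound on max divergence, then plug into the KL-from-max-divergence bound of \cite{DworkRV10}.

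Concretely, Lemma~\ref{lem:hybrid-singleq-divergence} says that for every measurable $\mathcal{E}\subseteq\mathcal{Y}$,
\[
  \Pr[M_H(q)\in\mathcal{E}] \le e^{\varepsilon/b}\,\Pr[M_O(q)\in\mathcal{E}],
  \qquad
  \Pr[M_O(q)\in\mathcal{E}] \le e^{\varepsilon/b}\,\Pr[M_H(q)\in\mathcal{E}].
\]
Taking logs and maximizing over $\mathcal{E}\subseteq\Supp(M_H(q))$ (respectively $\Supp(M_O(q))$) and applying Definition~\ref{def:max-div} immediately gives
\[
  D_\infty\bigl(M_H(q)\,\|\,M_O(q)\bigr) \le \frac{\varepsilon}{b},
  \qquad
  D_\infty\bigl(M_O(q)\,\|\,M_H(q)\bigr) \le \frac{\varepsilon}{b}.
\]
This is the entire content of step one; the point is that although the two distributions are not the outputs of a differentially private algorithm applied to neighboring datasets in the classical sense, they still satisfy the symmetric $(\varepsilon/b,0)$-indistinguishability condition, which is all that max divergence cares about.

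For step two, I simply invoke Lemma~\ref{lem:III.2-DRV10} (Lemma~III.2 of \cite{DworkRV10}) with $\varepsilon \gets \varepsilon/b$, $Y\gets M_H(q)$, $Z\gets M_O(q)$ (and again with $Y$ and $Z$ swapped). The hypothesis of that lemma is precisely the symmetric bound on max divergence established above, and its conclusion is
\[
  D\bigl(M_H(q)\,\|\,M_O(q)\bigr) \le \frac{\varepsilon}{b}\bigl(e^{\varepsilon/b}-1\bigr),
\]
with the analogous bound for the reverse KL divergence. This matches the stated conclusion of the lemma, so the proof is complete.

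There is no real obstacle. The only thing to be a little careful about is the discretization: because we have assumed outputs are rounded to a finite set $\mathcal{Y}$, the densities appearing in the proof of Lemma~\ref{lem:hybrid-singleq-divergence} get replaced by discrete probabilities, but the pointwise ratio bound survives integration (summation) over the rounding cells, so the max-divergence bound above is valid without any change. From there the appeal to Lemma~\ref{lem:III.2-DRV10} is purely formal.
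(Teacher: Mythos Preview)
Your proof is correct and follows essentially the same approach as the paper: first derive the two-sided max-divergence bound $D_\infty \le \varepsilon/b$ from Lemma~\ref{lem:hybrid-singleq-divergence}, then apply Lemma~\ref{lem:III.2-DRV10} to obtain the KL bound. The paper's argument is identical in structure and content.
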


\begin{proof}
By Lemma~\ref{lem:hybrid-singleq-divergence}, for every measurable event \( E \subseteq \mathcal{Y} \), we have
\[
  \frac{\Pr[\MMM_H(q) \in E]}{\Pr[\MMM_O(q) \in E]} \le \exp(\varepsilon/b),
  \qquad
  \frac{\Pr[\MMM_O(q) \in E]}{\Pr[\MMM_H(q) \in E]} \le \exp(\varepsilon/b).
\]
Taking the supremum over all \( E \subseteq \mathcal{Y} \) gives
\[
  D_\infty(\MMM_H(q)\,\|\,\MMM_O(q)) \le \frac{\varepsilon}{b},
  \qquad
  D_\infty(\MMM_O(q)\,\|\,\MMM_H(q)) \le \frac{\varepsilon}{b}.
\]
Applying Lemma~\ref{lem:III.2-DRV10}, yields the stated inequalities:
\[
  D(\MMM_H(q)\,\|\,\MMM_O(q)) \le \frac{\varepsilon}{b} \cdot \bigl(e^{\varepsilon/b} - 1\bigr),
  \qquad
  D(\MMM_O(q)\,\|\,\MMM_H(q)) \le \frac{\varepsilon}{b} \cdot \bigl(e^{\varepsilon/b} - 1\bigr).
\]
\end{proof}

\subsubsection{Proof of theorem~\ref{thm:hybrid-oracle-divergence}}
{
\renewcommand{\thetheorem}{\ref{thm:hybrid-oracle-divergence}}
\begin{theorem}
Let \( S \in \Supp(\mathcal{D})\) be a sample, and let \( \mathcal{A} \) be a fixed analyst. Consider two \(k\)-round interactions with \(\mathcal{A}\): one with the hybrid mechanism \(\MMM_H\), and one with the oracle mechanism \(\MMM_O\). For any \(\rho > 0\), define \(
  \varepsilon^* = \sqrt{2k \ln\!\left(\frac{1}{\rho}\right)}\,\frac{\varepsilon}{b} + k\,\frac{\varepsilon}{b}\,\bigl(e^{\varepsilon/b} - 1\bigr)
\). Then
\[
  \Pr_{t\leftarrow \MMM_H}\left[
    \ln \frac{\Pr_{\MMM_H}(T = t)}{\Pr_{\MMM_O}(T = t)} > \varepsilon^*
  \right] \le \rho,
  \quad \text{and} \quad
  \Pr_{t\leftarrow \MMM_O}\left[ 
    \ln \frac{\Pr_{\MMM_O}(T = t)}{\Pr_{\MMM_H}(T = t)} > \varepsilon^*
  \right] \le \rho
\]
\end{theorem}
\addtocounter{theorem}{-1}
}

\begin{proof}[Proof of theorem~\ref{thm:hybrid-oracle-divergence}]

Fix an analyst \(\AAA\) and a sample \(S\).
To show that \[
  \Pr_{t\leftarrow \MMM_H}\left[
    \ln \frac{\Pr_{\MMM_H}(T = t)}{\Pr_{\MMM_O}(T = t)} > \varepsilon^*
  \right] \le \rho
\]

We begin by decomposing the log-likelihood ratio over the \(k\) rounds:
\[
  \ln \frac{\Pr_{\MMM_H}[T = t]}{\Pr_{\MMM_O}[T = t]} \;=\;\sum_{i=1}^k 
     \ln \frac{
       \Pr_{\MMM_H}[T_i = t_i \mid T_{i-1} = t_{i-1}]
     }{
       \Pr_{\MMM_O}[T_i = t_i \mid T_{i-1} = t_{i-1}]
     }
\]

Since the analyst is assumed to be deterministic, the query in the \(i\)-th round is fully determined by the history up to that round. Therefore for any mechanism \(\mathcal{M}\) and for any \(t=(q_1,a_1,\dots,q_k,a_k) \in \mathcal{T_A}\), it holds that:
\[\Pr_M[T_i=t_i \mid T_{i-1}=t_{i-1}] = \Pr[Q_i=q_i \mid T_{i-1}=t_{i-1}]\cdot\Pr(\mathcal{M}(q_i)=a_i) = 1\cdot\Pr(\mathcal{M}(q_i)=a_i)
\]

Next, we define the random variable, for any \(t = (q_1,a_1,\dots,q_k,a_k)\in \mathcal{T_A}\):
\[
    C_i(t_i) = \ln \frac{
       \Pr_{\MMM_H}[T_i = t_i \mid T_{i-1} = t_{i-1}]
     }{
       \Pr_{\MMM_O}[T_i = t_i \mid T_{i-1} = t_{i-1}]
     } = 
  \ln \frac{\Pr(\MMM_H(q_i) = a_i)}{\Pr(\MMM_O(q_i) = a_i)}.
\]
Then
\[
  \ln \frac{\Pr_{\MMM_H}(T = t)}{\Pr_{\MMM_O}(T = t)} = \sum_{i=1}^k C_i(t_i).
\]

We want to apply Azuma–Hoeffding’s inequality ~\ref{lem:3.19-DR} to the sequence \(C_1, \dots, C_k\), to show that for any \(\rho >0 \)
\[
  \Pr_{t \sim \MMM_H}\!\Bigl[\sum_{i=1}^k C_i(t_i) >
     \sqrt{2k \ln\!\Bigl(\tfrac{1}{\rho}\Bigr)}\,\frac{\varepsilon}{b}
  + k\,\frac{\varepsilon}{b}(e^{\varepsilon/b} - 1)\Bigr]
  \le \rho
\]
which implies that
\[
  \Pr_{t\leftarrow \MMM_H}\left[
    \ln \frac{\Pr_{\MMM_H}(T = t)}{\Pr_{\MMM_O}(T = t)} > \varepsilon^*
  \right] \le \rho
\]

To apply Azuma--Hoeffding’s inequality (Lemma~\ref{lem:3.19-DR}) to the sequence \( C_1, \dots, C_k \), it suffices to verify the following conditions for all \( i \in \{1,\dots,k\} \):

\begin{enumerate}
  \item \label{item:azuma-bounded} 
  \(
    \Pr_{t \sim \MMM_H}\bigl(|C_i(t_i)| \le \tfrac{\varepsilon}{b}\bigr) = 1
  \).

  \item \label{item:azuma-expected} For any \( c_1, \dots, c_{i-1} \in \Supp(C_1,\dots,C_{i-1}) \):
  \[
    \mathbb{E}[C_i \mid C_1 = c_1, \dots, C_{i-1} = c_{i-1}]
    \le \frac{\varepsilon}{b} \cdot (e^{\varepsilon/b} - 1),
  \]
\end{enumerate}

We now verify each item in turn. Fix any transcript \( t = (q_1,a_1,\dots,q_k,a_k) \in \mathcal{T_A}\).

\paragraph{Verification of \ref{item:azuma-bounded}.}
From Lemma~\ref{lem:hybrid-singleq-divergence}, for every round \( i \), we have
\[ C_i(t_i) = 
  \ln\frac{\Pr[\MMM_H(q_i) = a_i]}{\Pr[\MMM_O(q_i) = a_i]} \le \varepsilon/b,
  \qquad
  -C_i(t_i) = \ln\frac{\Pr[\MMM_O(q_i) = a_i]}{\Pr[\MMM_H(q_i) = a_i]} \le \varepsilon/b.
\]
Thus for any \(t \in \mathcal{T_A}\):
\[
  |C_i(t_i)| \le \frac{\varepsilon}{b}.
\]
\paragraph{Verification of \ref{item:azuma-expected}.}
We begin by bounding the conditional expectation of \( C_i \) given any fixed transcript prefix \( t_{i-1} \in \mathcal{T_A}\). Since the analyst is deterministic, fixing \( t_{i-1} \) determines the query \( q_i \), and the randomness in round \( i \) lies only in the mechanism’s response.

\[
\begin{aligned}
&\mathbb{E}
\bigl[C_i(T_i)\mid T_{i-1} = t_{i-1}\bigr] 
= \sum_{t'_i \in \mathcal{T}_\mathcal{A}}
    \Pr_{\MMM_H}\bigl[T_i=t'_i \big| T_{i-1} = t_{i-1}\bigr]
    \cdot C_i(t'_i) \\
&= \sum_{t'_i \in \mathcal{T}_\mathcal{A}}
    \Pr_{\MMM_H}\bigl[T_i=t'_i \big| T_{i-1} = t_{i-1}\bigr]
    \cdot
    \ln \frac{
      \Pr_{\MMM_H}\bigl[T_i=t'_i \big| T_{i-1} = t'_{i-1}\bigr]
    }{
      \Pr_{\MMM_O}\bigl[T_i=t'_i \big| T_{i-1} = t'_{i-1}\bigr]
    } \\
&=  \sum_{a \in \mathcal{Y}} 
    \Pr[\MMM_H(q_i) = a] \cdot 
    \ln \frac{\Pr[\MMM_H(q_i) = a]}{\Pr[\MMM_O(q_i) = a]} \\
&= D\bigl(\MMM_H(q_i) \,\|\, \MMM_O(q_i)\bigr)
\le \frac{\varepsilon}{b} \,\bigl(e^{\varepsilon/b} - 1\bigr),
\end{aligned}
\]

where the third equality uses the fact that \( t'_{i-1} = t_{i-1},  \) since \(t'_i\sim \Pr_{\MMM_H}[\,\cdot\,|\,T_{i-1} = t_{i-1}]\), and the final inequality follows from Lemma~\ref{lem:max-div-singleq}.

Note that \( C_1,\dots,C_k \) are deterministic functions of the transcript, meaning that the transcript prefix \( t_{i-1} \) fully determines the values \( c_1,\dots,c_{i-1} \). Hence by showing the above bound holds for any transcript prefix in the support \(\mathcal{T_A}\), it implies the desired conditional expectation also holds for any  \( c_1, \dots, c_{i-1} \in \Supp(C_1,\dots,C_{i-1}) \). Therefore:
\[
  \mathbb{E}[C_i \mid C_1 = c_1, \dots, C_{i-1} = c_{i-1}]
  \le \frac{\varepsilon}{b} \cdot (e^{\varepsilon/b} - 1).
\]

\paragraph{Conclusion.}
Since both items \ref{item:azuma-bounded} and \ref{item:azuma-expected} hold, Azuma--Hoeffding’s inequality implies that for any \(\rho > 0\),
\[
  \Pr_{t \sim \MMM_H}\!\Bigl[\sum_{i=1}^k C_i(t_i) >
     \sqrt{2k \ln\!\Bigl(\tfrac{1}{\rho}\Bigr)}\,\frac{\varepsilon}{b}
  + k\,\frac{\varepsilon}{b}(e^{\varepsilon/b} - 1)\Bigr]
  \le \rho,
\]
as claimed.

A symmetric argument applies to the reversed log-likelihood ratio, by repeating the analysis with the roles of \(\MMM_H\) and \(\MMM_O\) swapped. Hence, for any  \(t \in \mathcal{T_A}\), we obtain
\[
  \Pr_{t\leftarrow \MMM_H}\left[
    \ln \frac{\Pr_{\MMM_H}(T = t)}{\Pr_{\MMM_O}(T = t)} > \varepsilon^*
  \right] \le \rho,
  \quad \text{and} \quad
  \Pr_{t\leftarrow \MMM_O}\left[ 
    \ln \frac{\Pr_{\MMM_O}(T = t)}{\Pr_{\MMM_H}(T = t)} > \varepsilon^*
  \right] \le \rho
\]

\end{proof}

\end{document}